\newcommand\vldbdoi{10.14778/3603581.3603597}
\newcommand\vldbpages{2591 - 2604}
\newcommand\vldbvolume{16}
\newcommand\vldbissue{10}
\newcommand\vldbyear{2023}
\newcommand\vldbauthors{\authors}
\newcommand\vldbtitle{\shorttitle} 
\newcommand\vldbavailabilityurl{https://github.com/ymzgithub/ymzgithub-traj_LDP_submission_sup}
\newcommand\vldbpagestyle{empty}
\begin{document}
\title{Trajectory Data Collection with Local Differential Privacy}

%%
%% The "author" command and its associated commands are used to define the authors and their affiliations.
\author{Yuemin Zhang}
\authornote{Work partially done at The Hong Kong Polytechnic University.}
\affiliation{%
  \institution{Harbin Engineering University}
  % \city{Harbin}
  % \country{China}
}
\email{zhangyuemin@hrbeu.edu.cn}

\author{Qingqing Ye}

\affiliation{%
  \institution{Hong Kong Polytechnic University}
  % \city{Hong Kong SAR}
  % \country{China}
}
\email{qqing.ye@polyu.edu.hk}

\author{Rui Chen}
\authornote{Corresponding authors.}
\affiliation{%
  \institution{Harbin Engineering University}
  % \city{Harbin}
  % \country{China}
}
\email{ruichen@hrbeu.edu.cn}

\author{Haibo Hu}
\affiliation{%
  \institution{Hong Kong Polytechnic University}
  % \city{Hong Kong SAR}
  % \country{China}
}
\email{haibo.hu@polyu.edu.hk}

\author{Qilong Han}\authornotemark[2]
\affiliation{%
  \institution{Harbin Engineering University}
  % \city{Harbin}
  % \country{China}
}
\email{hanqilong@hrbeu.edu.cn}

%%
%% The abstract is a short summary of the work to be presented in the
%% article.
\begin{abstract}
Trajectory data collection is a common task with many applications in our daily lives. Analyzing trajectory data enables service providers to enhance their services, which ultimately benefits users. However, directly collecting trajectory data may give rise to privacy-related issues that cannot be ignored. Local differential privacy (LDP), as the $\emph{de facto}$ privacy protection standard in a decentralized setting, enables users to perturb their trajectories locally and provides a provable privacy guarantee. Existing approaches to private trajectory data collection in a local setting typically use relaxed versions of LDP, which cannot provide a strict privacy guarantee, or require some external knowledge that is impractical to obtain and update in a timely manner. To tackle these problems, we propose a novel trajectory perturbation mechanism that relies solely on an underlying location set and satisfies pure $\epsilon$-LDP to provide a stringent privacy guarantee. In the proposed mechanism, each point's adjacent direction information in the trajectory is used in its perturbation process. Such information serves as an effective clue to connect neighboring points and can be used to restrict the possible region of a perturbed point in order to enhance utility. To the best of our knowledge, our study is the first to use direction information for trajectory perturbation under LDP. Furthermore, based on this mechanism, we present an anchor-based method that adaptively restricts the region of each perturbed trajectory, thereby significantly boosting performance without violating the privacy constraint. Extensive experiments on both real-world and synthetic datasets demonstrate the effectiveness of the proposed mechanisms.
\end{abstract}

\maketitle

%%% do not modify the following VLDB block %%
%%% VLDB block start %%%
\pagestyle{\vldbpagestyle}
\begingroup\small\noindent\raggedright\textbf{PVLDB Reference Format:}\\
\vldbauthors. \vldbtitle. PVLDB, \vldbvolume(\vldbissue): \vldbpages, \vldbyear.\\
\href{https://doi.org/\vldbdoi}{doi:\vldbdoi}
\endgroup
\begingroup
\renewcommand\thefootnote{}\footnote{\noindent
This work is licensed under the Creative Commons BY-NC-ND 4.0 International License. Visit \url{https://creativecommons.org/licenses/by-nc-nd/4.0/} to view a copy of this license. For any use beyond those covered by this license, obtain permission by emailing \href{mailto:info@vldb.org}{info@vldb.org}. Copyright is held by the owner/author(s). Publication rights licensed to the VLDB Endowment. \\
\raggedright Proceedings of the VLDB Endowment, Vol. \vldbvolume, No. \vldbissue\ %
ISSN 2150-8097. \\
\href{https://doi.org/\vldbdoi}{doi:\vldbdoi} \\
}\addtocounter{footnote}{-1}\endgroup
%%% VLDB block end %%%

%%% do not modify the following VLDB block %%
%%% VLDB block start %%%
\ifdefempty{\vldbavailabilityurl}{}{
\vspace{.3cm}
\begingroup\small\noindent\raggedright\textbf{PVLDB Artifact Availability:}\\
The source code, data, and/or other artifacts have been made available at \url{\vldbavailabilityurl}.
\endgroup
}
%%% VLDB block end %%%

\section{Introduction}

Nowadays mobile applications generate a considerable amount of personal private data, such as locations and browsing histories. The collection and analysis of such data help data collectors improve their services, which benefits both data collectors and users. As a common and important type of personal data, trajectory data have a wide range of applications in data mining tasks. For example, taxi companies could help drivers decide where to wait for passengers by analyzing passengers' trajectory data~\cite{YZZ2011}. However, in the absence of proper privacy protection, direct collection of sensitive trajectory data would give rise to serious privacy concerns. Differential privacy (DP)~\cite{DMN2006} is regarded as a golden standard that provides a rigorous privacy protection. However, it requires a trusted server to which users must upload their raw data. This requirement  may not always be possible in practical settings. In contrast, local differential privacy (LDP)~\cite{KLN2011} enables distributed users to perturb their data locally and send the sanitized data to an untrusted server, which is more suitable for private trajectory data collection.

However, due to the known low utility of LDP perturbation on trajectory release, most of the existing approaches for private trajectory data collection either use some relaxed versions of LDP or require external knowledge that is difficult to obtain and update in practice. The relaxations of LDP, such as geo-indistinguishability~\cite{ABC2013}, cannot provide the strict privacy guarantee of $\epsilon$-LDP. The only approach that satisfies $\epsilon$-LDP is the NGRAM mechanism~\cite{CCF2021}, which uses additional external knowledge, such as business hours and category information, to boost utility. Such information may not be readily available in practice and is often not updated in a timely manner. \textcolor{black}{Therefore, our goal is to introduce a more practical trajectory perturbation mechanism that satisfies $\epsilon$-LDP while relying solely on the essential location set of trajectories.}
% Therefore, we aim to propose a more practical trajectory perturbation mechanism that satisfies $\epsilon$-LDP without requiring any external knowledge.

It is a non-trivial technical challenge to design such a mechanism with high utility. \textcolor{black}{The location points of a trajectory are usually the elements of a finite point set, e.g., grid centers or points of interest (POIs).} Each point has its corresponding tuple of latitude and longitude. To perturb a trajectory under $\epsilon$-LDP, we must ensure that each point in the set has a probability of being included in the output. However, the point domain is normally so large that the perturbed point may be far from the real one, which leads to poor utility. On the other hand, a real-world trajectory places a natural restriction that the distance between any two neighboring points cannot be too large. 
To resolve these problems, a straightforward solution is to restrict the possible perturbed points for each original point in the trajectory, as used in~\cite{CCF2021}. \textcolor{black}{However, such a solution requires not only the trajectory and the distribution of all possible locations, but also public external information, from which we would like to free ourselves.}
% However, such a solution requires external information in addition to the trajectory and the distribution of all possible locations.

The key observation of this paper is that a user's trajectory is a time-ordered sequence of locations backed by the user's intentions, i.e., she needs to go to a few target places step by step. When she needs to go to a target place, she typically first determines an approximate direction from the current location to the target location.
 This direction information is the ``clue'' that connects every pair of adjacent points in a trajectory, which has not been explored in the existing approaches under LDP. \textcolor{black}{Motivated by this observation, in this paper, we propose a {\bf pivot sampling} perturbation mechanism that enhances privacy-aware trajectory modeling by sanitizing a given target point in a trajectory through a two-stage process.} \textcolor{black}{We first identify the adjacent points preceding and following a target point in the trajectory as its pivots.} 
 % We then use these pivots to obtain bi-directional information from them to the target point. 
 % First, we identify two pivot points of target point and the bi-direction from the pivot points to the target point. 
 A pivot of the target point is treated as an origin so that the direction between itself and the target point can be obtained. \textcolor{black}{In the second stage, we perturb the target point within the region defined by bi-directional information from the two pivots. It is worth noting that our proposed mechanism is applicable to not only trajectories satisfying the above key observation, but also many other types, as long as the points in a trajectory belong to a finite location point set.}
 
In contrast to pivot sampling leveraging bi-directional information to confine a single point's perturbed result, it is also possible to confine the spatial region of an entire trajectory. According to the first law of geography~\cite{T1970}, neighboring points in a trajectory are usually close to each other, and therefore the spatial region of a trajectory is highly likely to be relatively small in comparison with the entire area of the map. Thus, we further introduce an anchor-based method to limit the spatial region of the entire trajectory, and then apply pivot sampling to trajectory perturbation. A simple strategy %that can be adopted 
is to restrict trajectories to different spatial regions with the same fixed size. We theoretically analyze the utility of this strategy under different sizes. %perform a theoretical utility analysis for different sizes of the spatial region. 
Then, to avoid the use of additional hyperparameters (i.e., the region size), we put forward an adaptive strategy to achieve better performance than the mechanism using the best fixed size.

 The key contributions of our study are summarized as follows:
\begin{itemize}
    \item[$\bullet$] We \textcolor{black}{introduce an original privacy-aware trajectory modeling solution to private trajectory data collection. We model} the perturbation of the points in a trajectory as a two-stage process and propose a novel mechanism called pivot sampling, which captures bi-directional information of the points to restrict their regions. We also propose a guideline on choosing the direction granularity. To the best of our knowledge, our study is the first to combine direction information with trajectory perturbation in LDP. 
    \item[$\bullet$] We also propose an anchor-based pivot sampling mechanism that restricts the spatial regions of trajectories while satisfying $\epsilon$-LDP. We perform a theoretical utility analysis of using a fixed region size, and then propose a strategy to adaptively restrict the spatial region of a trajectory.
    \item[$\bullet$] We perform extensive experiments on real-world and synthetic datasets to demonstrate the effectiveness of our mechanism in comparison with existing solutions.
\end{itemize}

The remainder of this paper is organized as follows. We discuss the related work in Section~\ref{related work}. In Section~\ref{preliminary}, we provide necessary background information and the problem definition. In Sections~\ref{method} and~\ref{Optimization}, we describe the proposed mechanisms in detail. In Section~\ref{experiment}, we present the experimental results. Finally, we conclude our paper in Section~\ref{conclusion}. 
\section{Related Work}\label{related work}
To prevent privacy leakages, several classic privacy \textcolor{black}{models} have been proposed, such as $k$-anonymity~\cite{S2001,S2002}, $l$-diversity~\cite{MKG2007}, and $t$-closeness~\cite{LLV2006}. These models make different assumptions about the background knowledge possessed by attackers. In recent years, DP~\cite{DMN2006} has been considered the golden standard for privacy protection. In contrast with other privacy models, DP does not need to assume background knowledge owned by attackers, and provides a rigorous privacy guarantee for individual user data. It has been widely used in many tasks such as \textcolor{black}{frequent subgraph mining~\cite{XSX2016,CSX2018}, high-dimensional data publishing~\cite{ZCP2017, CXZ2015, STC2016, CTS2020}, and sequential data sanitization~\cite{CFD2012}}. In the typical setting of DP, data owners upload their data to a trusted data collector, which then perturbs and shares the querying results. However, it comes with deficiencies in some practical settings---on the one hand, the DP model requires a trusted server to collect the raw data from users; on the other hand, users are unlikely to agree to directly upload their sensitive data to the collector in real life. To avoid these problems, LDP~\cite{KLN2011} was proposed. Many studies~\cite{YHM2019,YHA2020} have focused on applying LDP to protect users' local data, and several companies~\cite{CJK2018} have already developed LDP in ther products, such as Google~\cite{EPK2014}, Apple~\cite{A2017}, and Microsoft~\cite{DKY2017}. In the LDP model, data owners must perturb their sensitive data locally and then send the perturbed version to an untrusted data collector.

DP/LDP generally considers a trade-off between utility and privacy. Several relaxations of DP/LDP have been proposed to obtain better data utility for various data analysis tasks at the cost of weakened privacy protection. For example, $d_{\mathcal{X}}$-privacy~\cite{CAB2013} assumes that the indistinguishability of two inputs is inversely proportional to the distance between them, whereas in pure LDP the indistinguishability remains the same and does not depend on the distance. Geo-indistinguishability~\cite{ABC2013} is a variant of $d_{\mathcal{X}}$-privacy in the context of location privacy. Personalized LDP~\cite{CLQ2016} is a variant of LDP that enables different users to use different privacy levels determined by themselves.

Although many studies have investigated location privacy, few studies have examined private trajectory data collection in the local setting. Most studies have focused on trajectory privacy in the centralized setting~\cite{CFD2012,YCH2022,GLT2018,HCM2015,GRL2020,SYH2023}. Chen \textit{et al}.~\cite{CFD2012} utilize the prefix tree structure to group sequences with identical prefixes into the same branch. 
% Chen \textit{et al}.~\cite{CAC2012} then propose a differentially private sequential data publishing mechanism based on a variable-length n-gram model to achieve further performance improvement. He \textit{et al}.
He \textit{et al}.~\cite{HCM2015} develop a mechanism for trajectory data synthesis, which discretizes raw trajectories using hierarchical reference systems to capture individual movements at different speeds. They also propose a post-processing strategy to restore directionality while sampling synthetic trajectories from the noisy model based on the Markovianity of the trajectory. \textcolor{black}{We would like to emphasize that, unlike our idea, their solution does not utilize the adjacent direction information to restrict the perturbation domain of a point. }\textcolor{black}{Furthermore, their solution is designed for DP and cannot be applied to LDP directly. }\textcolor{black}{Jiang \textit{et al.}~\cite{JSB2013} also consider utilizing direction information to publish trajectories under DP. They assume that the start and end points of a trajectory are known and that the distance between each pair of neighboring locations in a trajectory is bounded by a constant value. In contrast, we focus on LDP and aim to remove strong assumptions to make our solution more practical.} %They sample noisy angles and distances to determine the coordinates of perturbed points in a trajectory. However, adding unbounded noise directly to the coordinates can destroy the utility of the data, and the assumptions required by their method are too strong. In contrast, our method does not require such strong assumptions, utilizes bi-directional information to restrict the perturbation domain, and avoids adding unbounded noise directly.}
% We would like to emphasize that, unlike our idea, their does not utilize the adjacent direction information to restrict the perturbation domain of a point and that their solution is designed for DP and cannot be directly applied to LDP. 
% Gursoy \textit{et al}.~\cite{GLT2018} propose DP-STAR for differentially private and utility-preserving publishing of trajectory data. In this mechanism, a density-aware grid is constructed, and the trajectory is discretized. 

Most of the related studies in the local setting have been designed to satisfy the relaxations of LDP. To the best of our knowledge, the n-gram-based method NGRAM~\cite{CCF2021} is the only solution to private trajectory data collection that satisfies pure LDP. NGRAM makes use of public external knowledge, such as business hours and POI categories, for POI trajectory perturbation. However, it suffers from some limitations in real-world applications where the required public external information is often not readily available or cannot be updated in a timely manner. Therefore, in this paper we are motivated to remove the dependency on such public knowledge for better applicability.

\section{Preliminaries and Problem Formulation}\label{preliminary}
\subsection{Local Differential Privacy}
As a variant of differential privacy (DP)~\cite{DMN2006}, LDP~\cite{KLN2011} does not rely on a trusted data collector, and thus becomes a practical privacy notion for many applications in the distributed setting, such as frequency and mean estimation~\cite{XYH2022,WBL2017,DYH2022,WXY2019}, heavy hitter identification~\cite{BBC2021,QYY2016}, and preference ranking analysis~\cite{YCS2022}. A formal definition of LDP is given as follows.

\begin{definition}[$\epsilon$-LDP]
\label{def:ep_ldp}
Given a privacy budget $\epsilon>0$, a randomized mechanism $\mathcal{M}:\mathcal{X}\to\mathcal{Y}$ provides $\epsilon$-LDP if and only if, for any two inputs $x, x^{\prime}\in \mathcal{X}$ and any possible output $y\in \mathcal{Y}$, the following inequality holds
\begin{equation}\label{eq1}
\mathrm{Pr}[\mathcal{M}(x)=y]\leq e^{\epsilon} \times \mathrm{Pr}[\mathcal{M}(x^{\prime})=y].
\end{equation}
\end{definition}

LDP enables users to locally perturb their data in order to guarantee plausible deniability, which is controlled by privacy budget $\epsilon$. As with DP, LDP enjoys the desired properties of sequential composition and post-processing~\cite{DR2014}.
\begin{theorem}\label{theorem_sequential_composition}
    (Sequential Composition) Let each $\mathcal{M}_{i} (1\leq i \leq n)$ denote a mechanism satisfying $\epsilon_{i}$-LDP. Then the sequential combination of these mechanisms satisfies $\epsilon$-LDP, where $\epsilon=\sum_{i=1}^n\epsilon_{i}$.
\end{theorem}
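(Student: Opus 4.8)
The plan is to prove the statement directly from Definition~\ref{def:ep_ldp}, the key point being that the individual mechanisms use independent internal randomness, so the joint output distribution factorizes. First I would make precise what the ``sequential combination'' means: it is the mechanism $\mathcal{M}$ that, on input $x$, runs each $\mathcal{M}_i$ with its own independent coins and returns the tuple $y=(y_1,\dots,y_n)$, so that $\mathcal{M}\colon\mathcal{X}\to\mathcal{Y}_1\times\cdots\times\mathcal{Y}_n$. (One could also allow $\mathcal{M}_i$ to depend on the earlier outputs $y_1,\dots,y_{i-1}$; the same argument then goes through with conditional probabilities in place of unconditional ones, but the non-adaptive version is all that is needed here.)

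Next I would fix arbitrary inputs $x,x'\in\mathcal{X}$ and an arbitrary output $y=(y_1,\dots,y_n)$. By independence of the per-mechanism randomness,
\[
\mathrm{Pr}[\mathcal{M}(x)=y]=\prod_{i=1}^{n}\mathrm{Pr}[\mathcal{M}_i(x)=y_i].
\]
It suffices to treat $y$ with $\mathrm{Pr}[\mathcal{M}(x')=y]>0$, which forces every factor $\mathrm{Pr}[\mathcal{M}_i(x')=y_i]>0$; if some factor were $0$, then $\epsilon_i$-LDP of $\mathcal{M}_i$ would give $\mathrm{Pr}[\mathcal{M}_i(x)=y_i]=0$ and hence $\mathrm{Pr}[\mathcal{M}(x)=y]=0$, so the desired inequality holds trivially. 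Then I would take the ratio of the two joint probabilities, apply the $\epsilon_i$-LDP bound of each $\mathcal{M}_i$ factor by factor, and multiply:
\[
\frac{\mathrm{Pr}[\mathcal{M}(x)=y]}{\mathrm{Pr}[\mathcal{M}(x')=y]}
=\prod_{i=1}^{n}\frac{\mathrm{Pr}[\mathcal{M}_i(x)=y_i]}{\mathrm{Pr}[\mathcal{M}_i(x')=y_i]}
\leq\prod_{i=1}^{n}e^{\epsilon_i}=e^{\sum_{i=1}^{n}\epsilon_i}=e^{\epsilon}.
\]
Since $x$, $x'$, and $y$ were arbitrary, $\mathcal{M}$ satisfies $\epsilon$-LDP with $\epsilon=\sum_{i=1}^n\epsilon_i$.

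I do not expect a serious obstacle: the argument is essentially bookkeeping. The one step that genuinely needs justification is the product-form factorization of $\mathrm{Pr}[\mathcal{M}(x)=y]$, which is precisely where independence of the mechanisms' randomness enters; for an adaptive composition this step would instead use conditional probabilities $\mathrm{Pr}[\mathcal{M}_i(x)=y_i\mid y_1,\dots,y_{i-1}]$ and invoke $\epsilon_i$-LDP of $\mathcal{M}_i$ for each fixed history. A minor technicality is the continuous-output case, where the probability ratios should be interpreted as ratios of densities (Radon--Nikodym derivatives) and the same chain of inequalities holds pointwise almost everywhere.
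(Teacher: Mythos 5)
Your proof is correct and is the standard argument for sequential composition: factorize the joint output probability using the independence of the mechanisms' coins, bound each factor by $e^{\epsilon_i}$ via Definition~\ref{def:ep_ldp}, and multiply, with the zero-probability case handled properly. The paper itself does not prove this theorem but simply cites it as a known property of (L)DP, and your argument is exactly the textbook proof that the citation refers to, so there is nothing to reconcile.
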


\begin{theorem}
    (Post-Processing) Let $\mathcal{M}:\mathcal{X}\to\mathcal{Y}$ be a mechanism that satisfies $\epsilon$-LDP, and $f:\mathcal{Y}\to\mathcal{Y}^\prime$ be an arbitrary randomized mapping. Then \textcolor{black}{$f\circ\mathcal{M}:\mathcal{X}\to\mathcal{Y}^\prime$ satisfies $\epsilon$-LDP}, where $\circ$ denotes the composition of $f $ and $\mathcal{M}$.
\end{theorem}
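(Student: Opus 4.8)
The plan is to reduce the claim to a single application of the $\epsilon$-LDP guarantee of $\mathcal{M}$, after decomposing the output probabilities of $f\circ\mathcal{M}$ by conditioning on the hidden intermediate value produced by $\mathcal{M}$. Fix arbitrary inputs $x,x'\in\mathcal{X}$ and an arbitrary output $z\in\mathcal{Y}'$; the goal is to establish $\mathrm{Pr}[(f\circ\mathcal{M})(x)=z]\leq e^{\epsilon}\,\mathrm{Pr}[(f\circ\mathcal{M})(x')=z]$, which is precisely inequality~\eqref{eq1} instantiated for the composed mechanism.

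First I would make explicit the modeling assumption that the internal randomness of $f$ is drawn independently of the randomness of $\mathcal{M}$, so that for every fixed $y\in\mathcal{Y}$ the conditional law of $f\circ\mathcal{M}$ given $\mathcal{M}(x)=y$ equals the law of $f(y)$ and, crucially, does not depend on the input $x$. In the discrete case this yields the law of total probability $\mathrm{Pr}[(f\circ\mathcal{M})(x)=z]=\sum_{y\in\mathcal{Y}}\mathrm{Pr}[f(y)=z]\,\mathrm{Pr}[\mathcal{M}(x)=y]$. I would then invoke $\epsilon$-LDP of $\mathcal{M}$ term by term, $\mathrm{Pr}[\mathcal{M}(x)=y]\leq e^{\epsilon}\,\mathrm{Pr}[\mathcal{M}(x')=y]$, and, since each coefficient $\mathrm{Pr}[f(y)=z]$ is nonnegative, add the inequalities back up to obtain $\mathrm{Pr}[(f\circ\mathcal{M})(x)=z]\leq e^{\epsilon}\sum_{y\in\mathcal{Y}}\mathrm{Pr}[f(y)=z]\,\mathrm{Pr}[\mathcal{M}(x')=y]=e^{\epsilon}\,\mathrm{Pr}[(f\circ\mathcal{M})(x')=z]$. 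Taking $f$ deterministic is the special case where $\mathrm{Pr}[f(y)=z]\in\{0,1\}$, so it needs no separate treatment.

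The only real thing to be careful about — more bookkeeping than genuine obstacle — is the measure-theoretic version when $\mathcal{Y}$ or $\mathcal{Y}'$ is not countable: the sum over $y$ becomes an integral against the output distribution of $\mathcal{M}(x)$, the pointwise bound must be restated as the inequality $\mathrm{Pr}[\mathcal{M}(x)\in S]\leq e^{\epsilon}\,\mathrm{Pr}[\mathcal{M}(x')\in S]$ for every measurable $S$, and one then pushes this bound through the Markov kernel induced by $f$ (using Fubini/Tonelli and nonnegativity of the kernel). Since every mechanism considered in this paper operates over finite location sets, I would present the clean discrete argument above and simply remark that the general case follows verbatim with sums replaced by integrals, requiring no new ideas.
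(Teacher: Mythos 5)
Your proof is correct. The paper itself does not prove this theorem --- it simply states it as a known property of (L)DP and cites the standard reference --- so there is no in-paper argument to compare against. Your route is the canonical one: marginalize over the intermediate output $y$ of $\mathcal{M}$, observe that the kernel $\mathrm{Pr}[f(y)=z]$ does not depend on the input $x$ (the independence assumption you rightly make explicit), apply the pointwise $\epsilon$-LDP bound of $\mathcal{M}$ term by term, and resum. This is essentially the same as the textbook proof (which sometimes phrases the randomized case as a convex combination of deterministic post-processings rather than a direct marginalization, but the two are interchangeable), and your closing remark about the measure-theoretic generalization is accurate while being irrelevant here since all domains in this paper are finite.
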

From the sequential composition property, the amount of the privacy budget consumed will increase linearly when we apply the mechanisms sequentially; from the post-processing property, there is no additional privacy cost when perturbed data are used for further processing.

\subsection{Perturbation Mechanisms}

Some perturbation mechanisms that satisfy $\epsilon$-LDP have been proposed in the literature. In this subsection, we briefly introduce three classic mechanisms adopted in our study: $k$-ary randomized response ($k$-RR)~\cite{W1965, KOV2014}, exponential mechanism (EM)~\cite{MT2007} and square-wave mechanism (SW)~\cite{LWL2020}.

\begin{definition}[$k$-RR]
Given a privacy budget $\epsilon$, $k$-RR mechanism $\mathcal{M}:\mathcal{X}\to\mathcal{X}$ satisfies $\epsilon$-LDP, \textcolor{black}{if, for any input $x \in \mathcal{X}$, the output $y\in\mathcal{X}$} is sampled from the following distribution:
\begin{equation}\label{eq2}
\mathrm{Pr}[\mathcal{M}(x)=y]=
\begin{cases}
  \frac{e^{\epsilon}}{\mathcal{|X|}-1+e^{\epsilon}} & \text{ if } y=x, \\
  \frac{1}{\mathcal{|X|}-1+e^{\epsilon}} & \text{ otherwise}.
\end{cases}
\end{equation}
\end{definition}

By Equation~\ref{eq2}, $k$-RR always reports the original value (i.e., $y=x$) with a larger probability; otherwise, the mechanism returns one of the other values uniformly. It has been proven that $k$-RR achieves good performance when \textcolor{black}{the domain size $k$ (i.e., $\mathcal{|X|}$)} is not overly large (i.e., when $k<3e^\epsilon + 2$)~\cite{WBL2017}.

\begin{definition}[Exponential Mechanism]
    The exponential mechanism (EM) $\mathcal{M}$ preserves $\epsilon$-LDP, if, for any input $x \in \mathcal{X}$, the probability of any output $r \in \mathcal{R}$ is
    \begin{equation}\label{eq4}
    \mathrm{Pr}[\mathcal{M}(x)=r]=
        \frac{\exp(\frac{\epsilon u(x,r)}{2\Delta u})}{\sum_{r^{\prime}\in \mathcal{R}}\exp({\frac{\epsilon u(x,r^{\prime})}{2\Delta u}})},
    \end{equation}
    where $u(x,r)$ denotes the utility function of $x$ and $r$, and $\Delta u=\max_{x,x^\prime \in \mathcal{X}, \ r\in \mathcal{R}} |u(x,r)-u(x^\prime,r)|$ is the sensitivity of the utility score.
\end{definition}

The EM is a natural building block of answering queries with arbitrary utility scores, which makes it possible for a perturbed answer to be as close to the truth as possible (controlled by the utility score and privacy budget).

While both $k$-RR and the EM are used for categorical value perturbation, the Laplace mechanism~\cite{DMN2006} and Square-Wave Mechanism (SW) are typically designed for numerical values. Due to unbounded noise and high variance in the case of the Laplace mechanism, we alternatively adopt the SW mechanism~\cite{LWL2020} for numerical value perturbation.

\begin{definition}[Square-Wave Mechanism]
    The square-wave mechanism $\mathcal{M}$ satisfies $\epsilon$-LDP, if, for any input $x\in [0,1]$ and output $y\in [-b,b+1]$, the output probability of $y$ is:
    \begin{equation}\label{eq3}
        \mathrm{Pr}[\mathcal{M}(x)=y]=
        \begin{cases}
            \frac{e^{\epsilon}}{2be^{\epsilon}+1} & \text{if } |x-y|\leq b, \\
            \frac{1}{2be^{\epsilon}+1} & \text{otherwise},
        \end{cases}
    \end{equation}
    where $b=\frac{\epsilon e^{\epsilon}-e^{\epsilon}+1}{2e^{\epsilon}(e^{\epsilon}-1-\epsilon)}$.
\end{definition}
The SW mechanism takes a value in $[0,1]$ as input and returns a bounded output. The output range will have a higher concentration around the input value as the privacy budget increases, which is beneficial for utility enhancement.

% \begin{table}[]
%     \centering
%     \caption{Notations.}
%     \resizebox{0.8\columnwidth}{!}{
%     \begin{tabular}{|c|c|}
%     \hline
%     \textbf{Symbol} & \textbf{Description}   \\ \hline
%     $p$, $\mathcal{P}$ & A location point, and point set \\ \hline
%     $\tau$,  $\mathcal{T}$ & A trajectory,  and set of trajectories from all users \\ \hline
%     $|\tau|$, $|\mathcal{T}|$ & Length of the trajectory $\tau$,  and number of trajectories \\ \hline
%     $d, \mathcal{D}$ & Discrete direction, and discrete direction universe \\ \hline
%     $\alpha$, $p_{\alpha}$ & Trajectory anchor, and the nearest point of $\alpha$ \\ \hline
%     $\mathcal{R}$ & Radius of trajectory region \\ \hline
%     $\Theta$ & Set of direction preserving query ranges \\ \hline
%     $dist(\cdot)$ & Haversine distance \\ \hline
%     $\epsilon$ & Privacy budget \\  \hline
%     \end{tabular}
%     }
%     \label{tableNote}
% \end{table}

\subsection{Problem Statement}
We consider a trajectory as a time-ordered sequence of points. Each user $u$ in the population possesses a sensitive trajectory $\tau=\{p_{1}, p_2, \cdots, p_{|\tau|}\}$, where $p_i \in \mathcal{P}$ denotes a location in the form of a two-dimensional (2D) point with latitude and longitude, and $\mathcal{P}$ is a finite point set. \textcolor{black}{We assume that the location point set $\mathcal{P}$ is public and accessible to the data collector and all users. This is a common requirement for location/trajectory privacy preservation~\cite{ABC2013,BCP2014}, and $\mathcal{P}$ is readily available in practice.} \textcolor{black}{Given two points, we use the Haversine distance as the distance measure, which calculates their great-circle distance, namely, the shortest distance over the surface of earth.} 
% The main notations used in this paper are listed in Table~\ref{tableNote}.

For the sake of privacy, each user locally perturbs her trajectory $\tau$ into $\hat{\tau}=\{\hat{p}_{1}, \hat{p}_2, \cdots, \hat{p}_{|\tau|}\}$ while satisfying $\epsilon$-LDP. \textcolor{black}{Following Definition~\ref{def:ep_ldp}, we aim to provide \emph{user-level} $\epsilon$-LDP for each user by perturbing \emph{every} point in her trajectory. Hence, the perturbed trajectory ensures that an adversary cannot infer \emph{any} point (i.e., location) she visited in the original trajectory with high confidence.}
% Following Definition~\ref{def:ep_ldp}, the satisfaction of $\epsilon$-LDP implies that by observing a perturbed trajectory $\hat{\tau}$ and given any two input trajectories differ in one point (location), an adversary cannot determine the input trajectory from which $\hat{\tau}$ originates with high confidence.

Upon receiving the perturbed trajectories from all users, some trajectory data analysis tasks (e.g., preservation range queries) can be performed. The goal of our study is to preserve as much trajectory information as possible, while achieving a rigorous $\epsilon$-LDP privacy guarantee.

\section{Methodology}\label{method}
In this section, we first introduce our design rationale for private trajectory data collection with LDP and provide an overview of the workflow in Section~\ref{subsec:rationale}. Then, we present the implementation details in Sections~\ref{subsec:implementation} and ~\ref{subsec:choose_granularity}. Finally, we establish the privacy guarantee for the proposed mechanism in Section~\ref{subsec:privacy}.

\subsection{Design Rationale and Overview}
\label{subsec:rationale}
In \textcolor{black}{the task of} private trajectory data collection, the distance between any two points (i.e., locations) is the most intuitive information that can be used for perturbation, e.g., applying EM~\cite{MT2007} to perturb each point~\cite{YLP2017,TCA2019}. However, as the domain of \textcolor{black}{each point in} a trajectory is a large region, the perturbed point is likely to be far from the original point. To address this challenge, a natural idea is to restrict the domain of a perturbed point. A good source of restriction is the direction between two neighboring points---for a given point, once the direction of the next point is determined, the size of the perturbation domain will be reduced significantly. Therefore, we leverage both distance and direction information for trajectory perturbation.

An overview of the proposed mechanism is shown in Figure~\ref{fig_1}. Each point in the trajectory is perturbed in a two-stage process. First, the direction between \textcolor{black}{the target point (i.e., the point to be perturbed)} and its neighbor is perturbed, and then this point is perturbed under the restriction of the perturbed direction. To this end, the original trajectory is duplicated to create two copies, from which pivot points (i.e., the pink ones) are selected alternately to ensure that each point in the trajectory has private access to its adjacent direction information. Then, the direction between each pair of neighboring points in the two trajectories is perturbed. Under the perturbed direction constraints, each non-pivot point is perturbed using a considerably smaller point domain. Finally, the optimal perturbed result is determined based on the two independently perturbed trajectories.

\subsection{Implementation of Proposed Mechanism}
\label{subsec:implementation}
In this subsection, we propose the pivo\underline{t} sam\underline{p}ling (TP) mechanism as the implementation of the above perturbation process. It exploits a novel strategy called pivot perturbation to capture additional direction information. An illustration  of the TP mechanism is shown in Figure~\ref{fig_2}a. The process consists of three steps: independent perturbation, pivot perturbation, and the optimal perturbed trajectory determination. To utilize direction information to restrict each point's spatial region while avoiding privacy leakage, we need to identify the pivots of each point. A pivot of a target point is treated as an origin so that the direction between the target point and itself can be calculated. The neighboring points of each point in the original trajectory are suitable candidates for pivots. However, such points are sensitive and, therefore, cannot be directly used. Furthermore, directly perturbing points in the original trajectory in a sequential manner would accumulate excessive noise. We duplicate the original trajectory to create two copies, and approximately half of the points in each trajectory can be selected as pivots in an alternate manner (i.e., by alternating between the two copies). Thus, the other \textcolor{black}{non-pivot} points that are not selected can access perturbed direction(s) between the neighboring pivot(s) and themselves to reduce their perturbation domains. Then, we alternately select points as the pivots in the two copies to obtain two perturbed trajectories and generate the optimal result based on them. In this way, each point in the final perturbed trajectory can enjoy both direction and distance information privately.
% \begin{figure}[t]
%     \centering
%     \includegraphics[width=\columnwidth]{figures/fig_2_.pdf}
%     % \includegraphics[width=\textwidth]{figures/fig_2_.pdf}
%     \caption{An illustration of the TP mechanism.}
%     \label{fig_2}
% \end{figure}

\begin{figure}[t]
\centering
    \includegraphics[width=0.96\columnwidth]{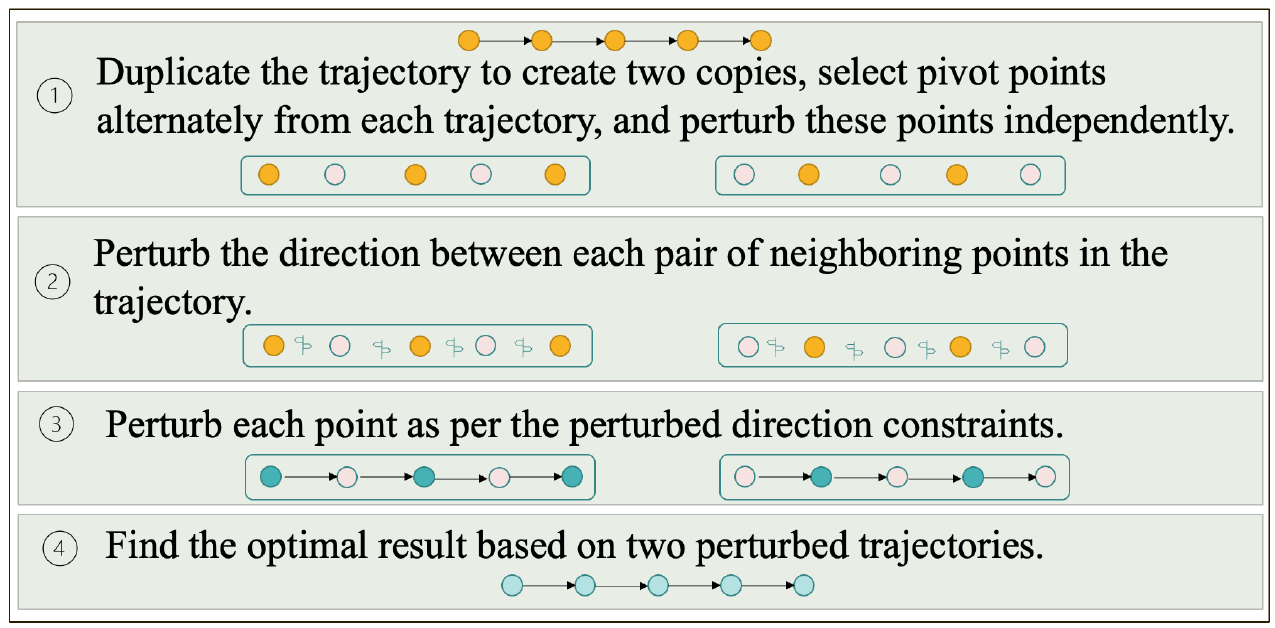}
    % \vspace{-0.2in}
    \caption{An overview of the proposed method (best viewed in color).}
    \label{fig_1}
    % \vspace{-0.2in}
\end{figure}

Let us take trajectory $\tau=\{p_{i}\in \mathcal{P}|1\leq i\leq |\tau|, |\tau|=5\}$ in Figure~\ref{fig_2}a as an example. The perturbation process is similar if $|\tau|$ is even. First, the trajectory $\tau$ is duplicated as $\tau^{\prime}$=$\tau^{\ast}$=$\tau$. \textcolor{black}{Then, the points in $\tau^\prime$ (resp. $\tau^\ast$) are divided into two sets --- a set of pivot points to be perturbed, and a set of non-pivot points getting access to the direction information.} The pivot points are selected alternately from $\tau^\prime$ (resp. $\tau^\ast$), each of which will be perturbed independently. Then the perturbation result is used to perturb non-pivot points in $\tau^\prime$ (resp. $\tau^\ast$), by utilizing the directional information between the target points and perturbed neighboring pivots, enabling each target point to enjoy a smaller perturbation domain.
% Then, for the independent perturbation and the two-fold pivot perturbation, which consists of direction perturbation and point perturbation based on direction, the points of $\tau^\prime$ and $\tau^\ast$ are first divided into two sets: the points that will be independently perturbed, and the points that will have access to the direction information. The former are selected alternately from $\tau^\prime$ and $\tau^\ast$ and will be independently perturbed as pivots. Then, the latter (i.e., the points that are not selected as pivots in $\tau^\prime$ and $\tau^\ast$) can use the direction information between the perturbed neighboring pivots and themselves, which can enjoy a smaller perturbation domain.

Specifically, for $\tau^{\prime}$, we first select a set of pivot points (e.g., $p_2$ and $p_4$ in Figure~\ref{fig_2}a) denoted as $\mathcal{P}^{\tau^{\prime}}_{ind}=\{p_2,p_4\}$. These points are perturbed using EM:
\begin{equation}\label{eq5}
    Pr[\Hat{p}=r]=
        \frac{exp(\frac{\epsilon u(p,r)}{2\Delta u})}{\sum_{r^{\prime}\in \mathcal{P}}exp({\frac{\epsilon u(p,r^{\prime})}{2\Delta u}})},
\end{equation}
where $p\in \mathcal{P}^{\tau^{\prime}}_{ind}$, $r\in \mathcal{P}$, $u(p,r)=-dist(p,r)$, $dist(\cdot)$ is the Haversine distance, and the sensitivity $\Delta u=\max_{p,r,r^{\prime}\in \mathcal{P}} |u(p,r)-u(p,r^{\prime})|$. For instance, the pink perturbed points $\hat{p}^\prime_2$ and $\hat{p}^\prime_4$ in Figure~\ref{fig_2}a are the pivots in $\tau^\prime$, denoted as $\hat{\mathcal{P}}^{\tau^{\prime}}_{ind}=\{\hat{p}^\prime_2, \hat{p}^\prime_4\}$. Then, we can determine the perturbed directions between each point in $\mathcal{P}^{\tau^{\prime}}-\hat{\mathcal{P}}^{\tau^{\prime}}_{ind}=\{p_1, p_3, p_5\}$ and its neighboring pivots, i.e., the previous and the next perturbed neighboring points in $\tau^\prime$. % Then, we can determine the perturbed directions between each independently perturbed point in $\tau^\prime$ (denote as \textcolor{red}{$\hat{\mathcal{P}}^{\tau^{\prime}}_{ind}=\{\hat{p}_2, \hat{p}_4\}$}) and its two neighbors, i.e., the previous and the next neighboring points in $\tau^\prime$. 
\begin{figure}[t]
    \centering
    \subfloat[An illustration of the TP mechanism.]{
    \includegraphics[width=0.9\columnwidth]
    {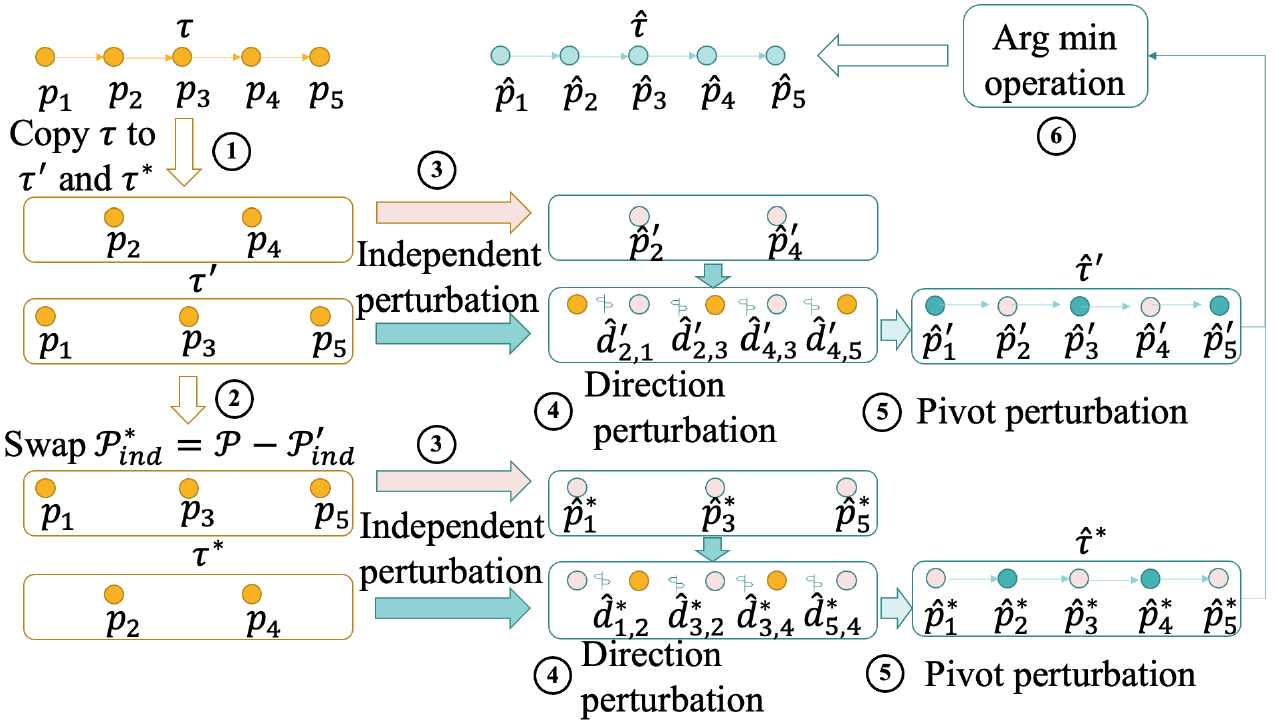}
    }\hfill
    \subfloat[Determine the perturbation domain of $p_3$.]{
    \includegraphics[width=0.4\columnwidth]
    {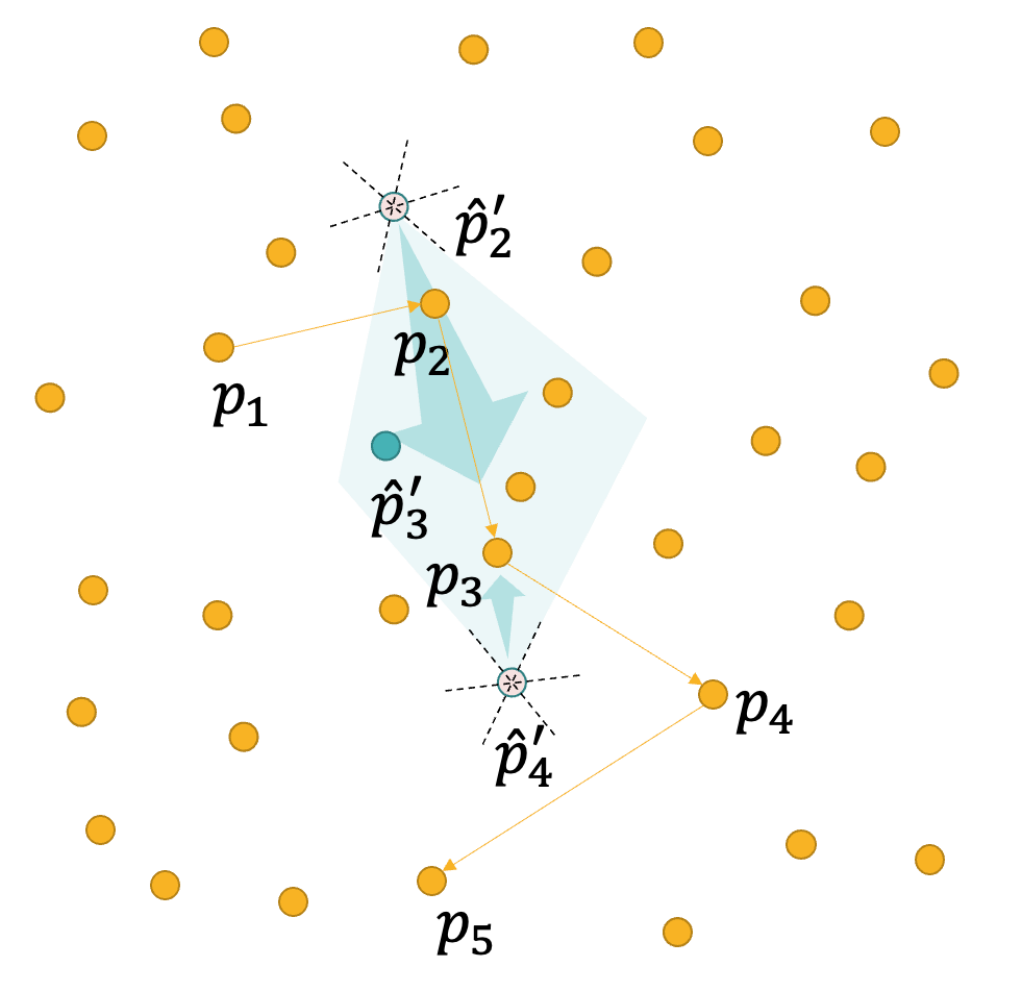}
    }\hfill
    \subfloat[Restrict trajectory region.]{
    \includegraphics[width=0.49\columnwidth]
    {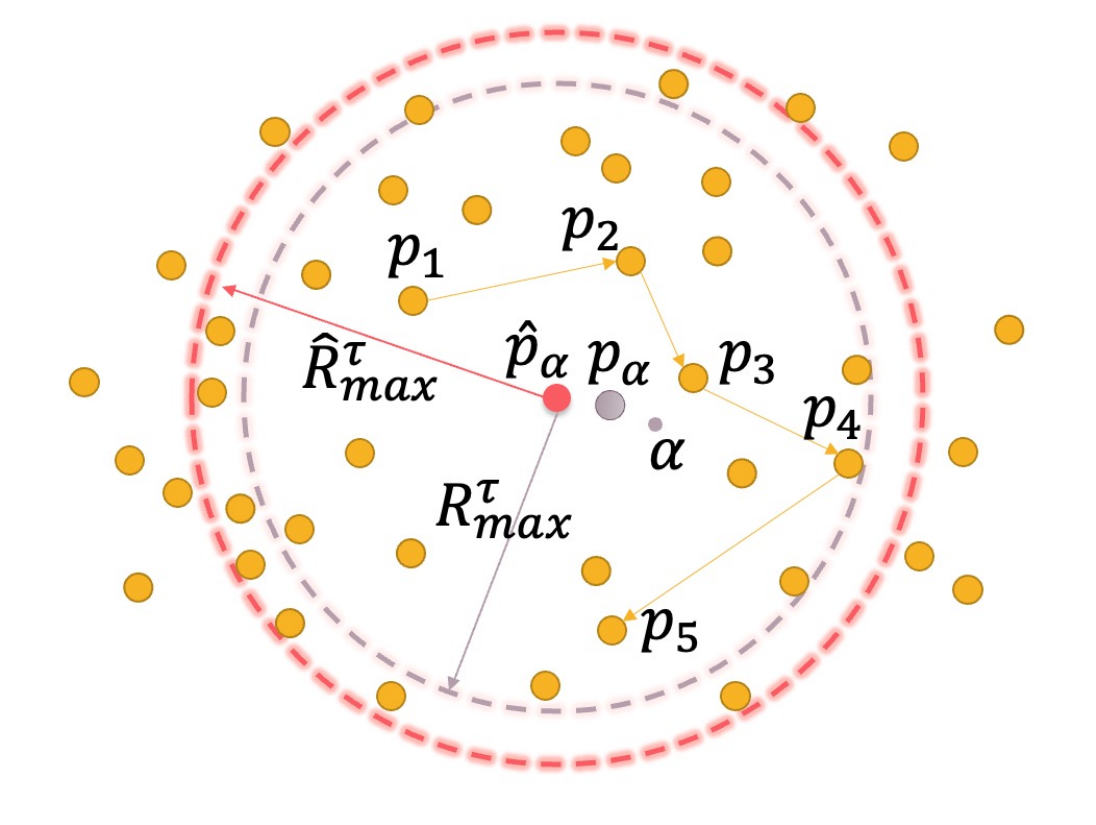}
    }
    \caption{Illustrations of the (A)TP mechanism.}
    \label{fig_2}
\end{figure}

To obtain a radian interval that covers the original direction with a high probability, randomized response~\cite{W1965, KOV2014} is adopted to perturb the directions. We must discretize the directions first, whose optimal granularity will be determined in Section~\ref{subsec:choose_granularity}. We take a neighboring pivot as the origin and the direction of the target point \textcolor{black}{(i.e., the non-pivot point to be perturbed under the direction constraint)} as the reference for the original discrete direction. %(i.e., $0^\circ$). 
Then the radian interval of a discrete direction with index $d\in \mathcal{D}$ is $[(2d-1)\frac{\pi}{|\mathcal{D}|},(2d+1)\frac{\pi}{|\mathcal{D}|}]$, where $\mathcal{D}=\{0,1,2, \cdots, |\mathcal{D}|-1\}$ denotes the discrete direction universe under a given granularity. \textcolor{black}{As shown in Figure~\ref{fig_2}b, when $|\mathcal{D}|=6$ and $p_3$ is the target point, $\hat{p}_2$ and $\hat{p}_4$ are designated as the origins for discretizing the directions, and the direction between $\hat{p}_2$ or $\hat{p}_4$ and $p_3$ serves as the median value of the central angle of the corresponding original discrete direction.} Let $\mathcal{D}^{\tau^\prime}$ denote the discrete directions to be perturbed in $\tau^\prime$. Then, in Figure~\ref{fig_2}a, we need to perturb the directions between each point in $\hat{\mathcal{P}}^{\tau^{\prime}}_{ind}$ and its perturbed neighboring pivots in $\tau^\prime$, namely, $\mathcal{D}^{\tau^{\prime}}=\{d_{2,1},d_{2,3},d_{4,3}, d_{4,5}\}$, where $d_{i,j}$ represents the index of the discrete direction from the $i$-th perturbed point to the $j$-th point in trajectory $\tau^\prime$. For instance, $d_{2,3}$ denotes the index of the discrete direction, the median value of whose central angle is towards the target point $p_3$ from the origin $\hat{p}^\prime_2$. 
% For example, let $\mathcal{D}^{\tau^\prime}$ denote the directions to be perturbed in $\tau^\prime$. Then in Figure~\ref{fig_2}, we perturb the directions between each point in $\hat{\mathcal{P}}^{\tau^{\prime}}_{ind}$ and its neighboring points in $\tau^\prime$, namely,  $\mathcal{D}^{\tau^{\prime}}=\{d_{2,1},d_{2,3},d_{4,3}, d_{4,5}\}$, where $d_{i,j}$ denote the discrete direction between the origin (i.e., the $i$-th perturbed point) and $j$-th point in trajectory $\tau^\prime$. 
The discrete direction is perturbed by $k$-RR:
\begin{equation}\label{eq6}
Pr[\mathcal{K}(d)=d^{\prime}]=
\begin{cases}
  \frac{e^{\epsilon}}{\mathcal{|D|}-1+e^{\epsilon}} & \text{ if } d=d^{\prime}, \\
  \frac{1}{\mathcal{|D|}-1+e^{\epsilon}} & \text{ if } d\neq d^{\prime}.
\end{cases}
\end{equation}

Let $\hat{\mathcal{D}}^{\tau^{\prime}}$ be the perturbed directions in $\tau^\prime$. Based on these perturbed directions and independently perturbed pivots, the perturbation domain of each remaining point in $\tau^\prime$ can be determined by Algorithm~\ref{alg1}. Note that the function $GetPointSet(\hat{p}_{j},\hat{d}_{j,k})$ in Line 3 returns the points located in the input radian interval $\hat{d}_{j,k}\in \hat{\mathcal{D}}^{\tau}$ of the point $\hat{p}_{j}\in \hat{\mathcal{P}}^{\tau}_{ind}$. 
% Let the input of the algorithm consists of the target point $p_{i}$, point set $P$, trajectory $\tau^\prime$, independently perturbed points $\hat{\mathcal{P}}^{\tau^\prime}_{ind}$, and perturbed directions $\hat{\mathcal{D}}^{\tau^\prime}$. For the $i$-th point $p_i$ in the input trajectory, we can obtain the intersection of the points located in the perturbed direction $\hat{d}_{i-1,i}$ of $\hat{p}^\prime_{i-1}$ and the points located in the perturbed direction $\hat{d}_{i+1,i}$ of $\hat{p}^\prime_{i+1}$, and then add the points in this intersection set and $p_{i}$ to the perturbation domain $\mathcal{P}^{i}$ to be returned (Line ~\ref{alg1line3}). 
For instance, in Figure~\ref{fig_2}b, $p_2$ and $p_4$ are perturbed independently to $\hat{p}^{\prime}_{2}$ and $\hat{p}^{\prime}_{4}$ (i.e., two pink points) as two neighboring pivots of the target point $p_3$. The green arrows represent the perturbed discrete directions from the two pivots as origins (i.e., $\hat{p}^{\prime}_{2}$ and $\hat{p}^{\prime}_{4}$) to the target point $p_3$ (assuming that two correct discrete directions are reported). Then the set of points located in the corresponding central angle are selected as the outputs, i.e.,  $GetPointSet(\hat{p}^{\prime}_{2}, \hat{d}_{2,3})$ and $GetPointSet(\hat{p}^{\prime}_{4}, \hat{d}_{4,3})$. Finally, the intersection of the two point sets is obtained as the output of Algorithm~\ref{alg1}, i.e., the perturbation domain of $p_3$, which consists of the points located in the green area shown in Figure~\ref{fig_2}b. 
By applying Algorithm~\ref{alg1}, the perturbation domains of $p_{1}$, $p_{3}$ and $p_{5}$ can be obtained as $\mathcal{P}^{1}$, $\mathcal{P}^{3}$ and $\mathcal{P}^{5}$, respectively. These smaller perturbation domains will be increasingly accurate as $\epsilon$ increases, thus facilitating the selection of better perturbed points. Next, the other points in $\tau^\prime$ can be perturbed with the corresponding perturbation domains using EM, that is, for each $p_{j}\in \mathcal{P}- \mathcal{P}^{\tau^{\prime}}_{ind}$, $\mathcal{P}$ is substituted with $\mathcal{P}^{j}$ in Equation~\ref{eq5} to obtain the perturbed $\Hat{p}^\prime_{j}$.

\begin{algorithm}[t]
  \small 
  \caption{GetPointDomain}\label{alg1}
  \KwIn{Target point $p_{i}$, point set $\mathcal{P}$, trajectory $\tau$, independently perturbed points $\hat{\mathcal{P}}^{\tau}_{ind}$ and perturbed directions $\hat{\mathcal{D}}^{\tau}$}
  \KwOut{The perturbation domain $\mathcal{P}^{i}$ of the target point}
  $\mathcal{P}^{i}=\emptyset$\;\label{alg1line1}
  \uIf{$1 < i < |\tau|$}
  {
    $\mathcal{P}^{i}\!=\! GetPointSet(\hat{p}_{i-1}, \hat{d}_{i-1,i}) \! \cap \! GetPointSet(\hat{p}_{i+1}, \hat{d}_{i+1,i})$\;\label{alg1line3} 
    \If{$\mathcal{P}^{i}=\emptyset$}
    {
      $\mathcal{P}^{i}=\mathcal{P}$\;\label{alg1line5}
    }
    $\mathcal{P}^{i}=\{p_{i}\}\cup \hat{\mathcal{P}}^{i}$\;\label{alg1line6}
  }\uElseIf{$i=1$}{
    $\mathcal{P}^{i}=\{p_{i}\}\cup GetPointSet(\hat{p}_{i+1}, \hat{d}_{i+1,i})$\;\label{alg1line8}
  }\Else{
    $\mathcal{P}^{i}=\{p_{i}\}\cup GetPointSet(\hat{p}_{i-1}, \hat{d}_{i-1,i})$\;\label{alg1line11}
  }
  \Return  $\mathcal{P}^{i}$\;
\end{algorithm}

  % \vspace{-0.15in}\tcp{$GetPointSet(\hat{p}_{j},\hat{d}_{j,k})$ \textcolor{red}{queries and returns the points located in the input direction interval $\hat{d}_{j,k}\in \hat{\mathcal{D}}^{\tau}_{j,k}$ of the input point $\hat{p}_{j}\in \hat{\mathcal{P}}^{\tau}_{ind}$.}}   

The other copy $\tau^{\ast}$ can be processed in a similar way. 
% This perturbation process can be carried out in parallel, as illustrated in \textcolor{red}{Figure~\ref{fig_2}a}. 
This time, $\mathcal{P}^{\tau^{\prime}}_{ind}$ and $\mathcal{P}-\mathcal{P}^{\tau^{\prime}}_{ind}$ are swapped, i.e., for $\tau^{\ast}$, we set $\mathcal{P}^{\tau^{\ast}}_{ind}=\mathcal{P}^{\tau^{\prime}}-\mathcal{P}^{\tau^{\prime}}_{ind}$ and $\mathcal{P}^{\tau^{\ast}}-\mathcal{P}^{\tau^{\ast}}_{ind}=\mathcal{P}^{\tau^{\prime}}_{ind}$. For example, in Figure~\ref{fig_2}a, $\mathcal{P}^{\tau^{\ast}}_{ind}=\{p_{1},p_{3},p_{5}\}$ is perturbed into $\hat{\mathcal{P}}^{\tau^{\ast}}_{ind}=\{\hat{p}^\ast_1,\hat{p}^\ast_3,\hat{p}^\ast_5\}$ by Equation~\ref{eq5}. Then,  $\mathcal{D}^{\tau^{\ast}}=\{d_{1,2},d_{3,2},d_{3,4},d_{5,4}\}$ is perturbed to obtain $\hat{\mathcal{D}}^{\tau^{\ast}}=\{\hat{d}_{1,2},\hat{d}_{3,2},\hat{d}_{3,4},\hat{d}_{5,4}\}$. Based on these perturbed discrete directions, the other non-pivot points in $\tau^{\ast}$ are perturbed by EM to obtain $\hat{\tau}^{\ast}$.

By applying independent perturbation and pivot perturbation independently on both $\tau^\prime$ and $\tau^\ast$ (as shown in Algorithm~\ref{alg3}, which invokes Algorithm~\ref{alg2} twice with flag $\mathcal{F}=1$ and 0, respectively), two perturbed points are obtained for each point in $\tau$: one based on only the distance factor of the original point, and the other perturbed under the direction restriction of neighboring pivot(s). Then, these two trajectories are combined to obtain the optimal perturbed trajectory by solving the following equation:
\begin{equation}\label{eq8}
    \underset{\hat{\tau}}{\arg\min}\sum\nolimits_{i=1}^{|\hat{\tau}|}dist(\hat{p}_{i},\hat{p}^{\prime}_{i})+dist(\hat{p}_{i},\hat{p}^{\ast}_{i}),
\end{equation}

\begin{algorithm}[t]
	\small 
  \caption{Independent and Pivot Perturbation (Pivot)}\label{alg2}
  \KwIn{Trajectory $\tau$, point domain $\mathcal{P}$, privacy budget $\epsilon$, discrete direction set $\mathcal{D}$, flag $\mathcal{F}$}
  \KwOut{Perturbed trajectory $\hat{\tau}$}
Initialize $\hat{\mathcal{P}}^{\tau}_{ind}, \hat{\mathcal{D}}^{\tau}, \mathcal{P}_{res}$ and $\hat{\mathcal{P}}_{res}$ to $\emptyset$\;
  $\epsilon_d=0.75\epsilon$,
  $\epsilon_{ind}=(\epsilon-\epsilon_d)/2$,
  $\epsilon_{rest}=(\epsilon-\epsilon_d)/2$\;
  \For{$1\leq i\leq|\tau|$}{
  \tcp{Assuming $|\tau|$ is odd}
  \If{i\%2==$\mathcal{F}$}{
  Use Equation~\ref{eq5} with $\epsilon_{ind}/|\tau|$ to perturb the $i$-th point in $\tau$ and obtain $\hat{p}_{i}$\;
  $\hat{\mathcal{P}}^{\tau}_{ind}=\hat{\mathcal{P}}^{\tau}_{ind}\cup\{\hat{p}_i\}$\;
  }\Else{
  $\mathcal{P}_{res}=\mathcal{P}_{res}\cup\{p_i\}$\;
  }
  }
  \For{$p_j$ in $\mathcal{P}_{res}$}{
  Use Equation~\ref{eq6} with equal budget $\epsilon_{d}/2(|\tau|-1)$ to perturb the discrete directions $d_{i-1,i}$ and $d_{i+1,i}$\;
  %\tcp{For the first and last points in the trajectory, we only perturb and use one side direction information}
  $\hat{\mathcal{D}}^{\tau}=\hat{\mathcal{D}}^{\tau}\cup\{\hat{d}_{i,i-1},\hat{d}_{i,i+1}\}$\;
  $\mathcal{P}^{j}=GetPointDomain(p_j,\mathcal{P},\tau, \hat{\mathcal{P}}^{\tau}_{ind}, \hat{\mathcal{D}}^{\tau})$\;
  Use Equation~\ref{eq5} with $\epsilon_{rest}/|\tau|$ by substituting $\mathcal{P}$ with $\mathcal{P}^{j}$ to obtain $\hat{p}_j$\;
  $\hat{\mathcal{P}}_{res}=\hat{\mathcal{P}}_{res}\cup\{\hat{p}_j\}$\;
  }
  $\hat{\tau}=OrderedByOriginalIdx(\hat{\mathcal{P}}^{\tau}_{ind}\cup\hat{\mathcal{P}}_{res})$\;
  \Return  $\hat{\tau}$\;
\end{algorithm}

\noindent where $\hat{p}_{i}\in \mathcal{P}$ is the $i$-th point in $\hat{\tau}$, $\hat{p}^{\prime}_{i}\in \hat{\tau}^{\prime}$, $\hat{p}^{\ast}_{i}\in \hat{\tau}^{\ast}$, and $dist(\cdot)$ denotes the Haversine distance. Then, the final perturbed trajectory benefits from bi-directional information. However, the challenge remains on how to determine the optimal direction granularity.

\subsection{Choosing the Direction Granularity}
\label{subsec:choose_granularity}
From the perspective of the utility of the proposed mechanism, a coarse-grained direction would cover a greater number of points, leading to a higher recall rate for the chosen points. Furthermore, a smaller $|\mathcal{D}|$ increases the accuracy of the $k$-RR mechanism. Unfortunately, the coverage of a greater number of noisy points may result in a considerably large perturbation domain when EM is used to perturb points under the perturbed direction constraint. In contrast, a fine-grained direction can lead to a smaller perturbation domain, but faces more challenges to select the exact direction due to the large $|\mathcal{D}|$. In this case, even though it is difficult to identify the original discrete direction, there is chances to obtain discrete directions close to the original one. As $\epsilon$ increases, the probability of obtaining such a close discrete direction increases.

In a nutshell, a proper choice of the direction granularity represents a critical trade-off between the preservation of direction and the perturbation domain size of EM. The influence of the perturbation domain size depends not only on the chosen direction granularity but also on the distribution of points on the map. The analysis of the preservation probability of the target point to be perturbed is difficult because the distribution varies for each point. Therefore, we opt to design a general strategy to guide the choice of the direction granularity independent of a specific dataset. We focus on only the direction-preserving performance under different granularities based on the $k$-RR mechanism and different $\epsilon$ values.

\begin{algorithm}[t]
	\small 
  \caption{TP Mechanism (TP)}\label{alg3}
  \KwIn{Trajectory $\tau$, point set $\mathcal{P}$, privacy budget $\epsilon$}
  \KwOut{Perturbed trajectory $\hat{\tau}$}
  %\tcp{Copy the trajectory and choose the direction granularity}
  $\tau^{\prime}=\tau$,
  $\tau^{\ast}=\tau$\;
  Solve Equation~\ref{eqprob} to obtain the discrete direction set $\mathcal{D}$\;
  \tcp{Pivot perturbation, assuming $|\tau|$ is odd}
  $\hat{\tau}^{\ast} = Pivot(\tau^{\ast},\mathcal{P},\epsilon/2,\mathcal{D},\mathcal{F}=1)$\;
  $\hat{\tau}^{\prime} = Pivot(\tau^{\prime},\mathcal{P},\epsilon/2,\mathcal{D},\mathcal{F}=0)$\;
  \tcp{Find the optimal perturbed trajectory}
  Solve Equation~\ref{eq8} to obtain $\hat{\tau}$ based on $\hat{\tau}^{\ast}$ and $\hat{\tau}^{\prime}$\;
  \Return  $\hat{\tau}$\;
\end{algorithm}
\begin{figure}[t]
% \vspace{-0.3in}
    \centering
    \subfloat[$\theta=\pi /12$]{
    \includegraphics[width=0.15\columnwidth]{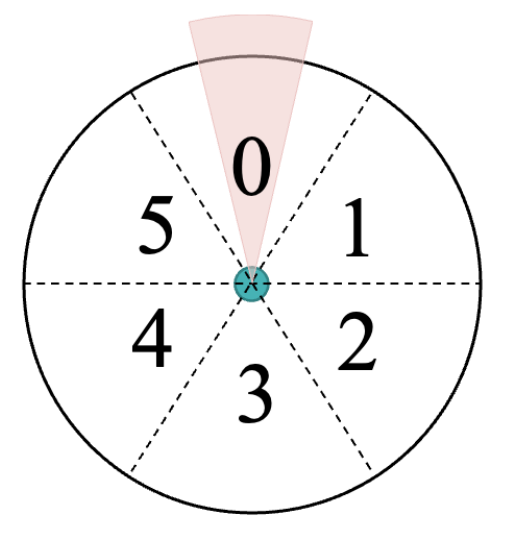}
    }\hfill
    \subfloat[$\theta=\pi /6$]{
    \includegraphics[width=0.17\columnwidth]{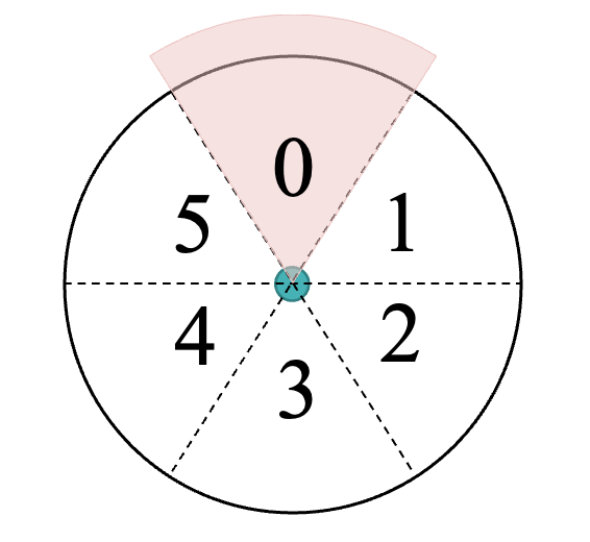}
    }\hfill
    \subfloat[$\theta=\pi /4$]{
    \includegraphics[width=0.15\columnwidth]{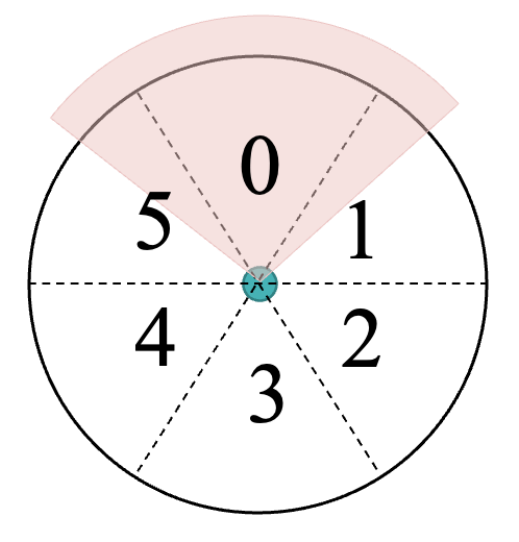}
    }\hfill
    \subfloat[$\theta=\pi /2$]{
    \includegraphics[width=0.17\columnwidth]{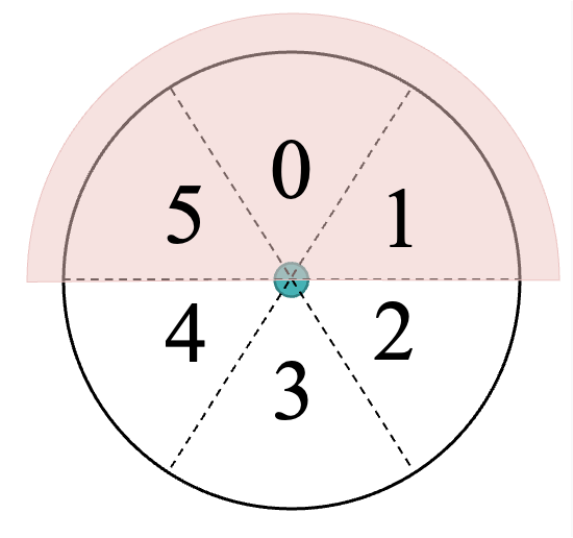}
    }
    % \vspace{-0.1in}
    \caption{Calculation of the \textcolor{black}{average direction-preserving success probability. The pink area denotes the query range $\theta$
    , with sector 0 serving as the input for $k$-RR, representing the original discrete direction, the median value of whose central angle is towards a target point from the green origin.
    }}
    \label{fig_4}
    % \vspace{-0.15in}
\end{figure}
% \begin{figure}[t]
%     \centering
%     \begin{overpic}[width=\columnwidth]{}
%     \put(10.8,15){\tiny{$0$}}
%     \put(15,12.5){\tiny{$1$}}
%     \put(15,7.5){\tiny{$2$}}
%     \put(10.8,6){\tiny{$3$}}
%     \put(7,7.5){\tiny{$4$}}
%     \put(7,12.5){\tiny{$5$}}
%     \put(35.8,15){\tiny{$0$}}
%     \put(40,12.5){\tiny{$1$}}
%     \put(40,7.5){\tiny{$2$}}
%     \put(35.8,6){\tiny{$3$}}
%     \put(32,7.5){\tiny{$4$}}
%     \put(32,12.5){\tiny{$5$}}
%     \put(61.3,15){\tiny{$0$}}
%     \put(65.5,12.5){\tiny{$1$}}
%     \put(65.5,7.5){\tiny{$2$}}
%     \put(61.3,6){\tiny{$3$}}
%     \put(57.5,7.5){\tiny{$4$}}
%     \put(57.5,12.5){\tiny{$5$}}
%     \put(86.4,15){\tiny{$0$}}
%     \put(90.6,12.5){\tiny{$1$}}
%     \put(90.6,7.5){\tiny{$2$}}
%     \put(86.4,6){\tiny{$3$}}
%     \put(82.6,7.5){\tiny{$4$}}
%     \put(82.6,12.5){\tiny{$5$}}
%     \end{overpic}
%     \caption{Calculation of the \textcolor{red}{direction-preserving success probability}.}
%     \label{fig_4}
% \end{figure}
% Our goal is to determine the direction granularity that ensures the best overall preservation of the direction information of the target point. 

Since the $k$-RR mechanism always returns the original discrete direction with the highest probability, evaluating the direction-preserving performance within just one constant range may not be the fairest approach. The granularity with the highest probability is always the one that is closest to the constant range. 
% \textcolor{red}{Only querying the success probability of the perturbed direction within just one constant range of directions would not be the fairest approach because the $k$-RR mechanism always returns the original discrete direction with the highest probability and the granularity with the highest probability is always the one that is nearest to the constant range.}
% Based on the property of the $k$-RR mechanism that it always returns the original discrete direction with the highest probability, only querying the success probability of the perturbed direction within just one constant range of directions would not be the fairest approach because the granularity with the highest probability is always the one that is nearest to the constant range.
Therefore, we choose to compare the average direction-preserving success probability across different radian intervals of different granularities under a given $\epsilon$ value. To this end, we choose the granularity $|\mathcal{D}|$ as follows:
\textcolor{black}{\begin{equation}\label{eqprob}
     \underset{g \in \mathcal{D}_c}{\arg\max}\sum\nolimits_{\theta_j\in \Theta}\sum\nolimits_{d_i\in \mathcal{R}(g)}
     \varphi(d_i; \theta_j)\cdot \lambda(d_i; \epsilon; g)
     /|\Theta|,
\end{equation}
\begin{equation}\notag
     \varphi(d_i; \theta_j)=\frac{|[(2d_i-1)\frac{\pi}{g},(2d_i+1)\frac{\pi}{g}] \cap [-\theta_j, \theta_j]|}{2\pi /g},
\end{equation}
\begin{equation}\notag
    \lambda(d_i; \epsilon; g) =
\begin{cases}
  \frac{e^{\epsilon}}{g-1+e^{\epsilon}} & \text{ if } d_i=d, \\
  \frac{1}{g-1+e^{\epsilon}} & \text{ otherwise},
\end{cases}
\end{equation}
}
% \begin{equation}\label{eqprob}
%      \underset{|\mathcal{D}|}{\arg\max}\sum_{\theta\in \Theta}Pr[\varphi(\textcolor{red}{\mathcal{M}}(d;\epsilon;|\mathcal{D}|))\leq \theta]/|\Theta|,
% \end{equation}
where \textcolor{black}{$\mathcal{D}_c$ denotes the candidate direction granularities, $\mathcal{R}(g)=\{0,1,2, \cdots, g-1\}$ denotes the discrete directions based on a candidate granularity $g$, $\Theta=\{\theta_j|\theta_j=\pi /g, g\in \mathcal{D}_c\}$ denotes the set of direction-preserving query ranges, and $d$ represents the correct discrete direction. 
% The median value of the central angle of the discrete direction $d$ is denoted as $0^\circ$ towards the target point from the origin. 
The calculation of the average success probability can be treated as a weighted average of those results from direction-preserving range queries, i.e., the probabilities of preserving the correct radian intervals under different $g \in \mathcal{D}_c$ values when perturbed by $k$-RR with a given $\epsilon$ value.
% i.e., the probabilities of which different correct direction ranges are retained when they are perturbed by $k$-RR under fixed $\epsilon$ and $g \in \mathcal{D}_c$. 
% The point distribution inside the discrete direction is simply treated as a uniform distribution. 
% Here, $\varphi(\cdot)$ calculates the weight of the possible output $d_i\in \mathcal{R}(g)$ when perturbing directions with $k$-RR, whereas $\lambda(\cdot)$ returns different probabilities that depend on $d_i$, $\epsilon$, and $g$.
} 
% The point distribution inside the discrete direction is simply treated as a uniform distribution, and the corresponding probability is calculated. 

\textcolor{black}{For example, in Figure~\ref{fig_4}, assume that the current granularity $g=6 \in \mathcal{D}_c=\{2,4,6,12\}$, 0 is the original discrete direction, $\Theta=\{\pi/2,\pi/4,\pi/6,\pi/12\}$, and $q=\frac{e^{\epsilon}}{g-1+e^{\epsilon}}$. The query values in $\Theta$ correspond to candidate granularities. When the query is $\theta=\pi /12$, as shown in Figure~\ref{fig_4}a, $\lambda(d_i=0;\epsilon;g=6)$ returns $q$, and the weight $\varphi(d_i=0;\theta=\pi /12)$ is the proportion of the pink radian interval to the entire radian interval of the sector that is denoted as 0, i.e., $\frac{|[-\frac{\pi}{12},\frac{\pi}{12}] \cap [-\frac{\pi}{12}, \frac{\pi}{12}]|}{2\pi /6}=\frac{\pi/12-(-\pi/12)}{2\pi/6}=\frac{1}{2}$. The success probability is calculated in a similar way when the query is changed, as shown in Figure~\ref{fig_4}b-\ref{fig_4}d.}
% the success probability is the average of  $3q+q(\frac{\pi}{6}/\frac{\pi}{3})+(1-q)[(\frac{\pi}{2}-\frac{\pi}{3})/\frac{5\pi}{3}]+(1-q)[(\pi-\frac{\pi}{3})/\frac{5\pi}{3}]$. 
% , e.g., the query range of $[-\pi /12, \pi /12]$ for the query value $\pi /12$ corresponds to the reciprocal of the proportion of the sector 0 in $2\pi$, i.e., $[\pi /12 -(-\pi /12)] =2\pi/12$, for a candidate granularity $g=12$. 
The selection of $g$ is independent of a specific point, but relies on the privacy budget and candidate direction granularities. In particular, $g$ should be between $2$ and $12$; otherwise, the region under the perturbed direction constraint would be too small to cover the correct points. 
% The default candidate granularities are $\mathcal{D}_c=\{2,4,6,12\}$ in this work, whose corresponding query ranges are $\Theta=\{\pi /2, \pi /4, \pi /6, \pi /12\}$. Thus, the best granularity can be determined by evaluating each candidate granularity using various ranges derived from the candidates themselves, namely, $2\pi /g_i=|[-\theta_i, \theta_i]|$, where $g_i \in \mathcal{D}_c=\{2,4,6,12\}$ and $\theta_i \in \Theta=\{\pi /2, \pi /4, \pi /6, \pi /12\}$.
% The recommended granularities should be between 2 and 12; otherwise, the region under the perturbed direction constraint may be too small to cover the correct points. Each candidate granularity can be tested, and the one that yields the best score can be chosen. 
% As the distribution of the output of the $k$-RR mechanism is almost uniform when $\epsilon$ is small, the success probability of choosing the right direction will decrease with finer granularity. 
When $\epsilon$ increases, the perturbed direction is more accurate, and the benefits of the fine-grained direction becomes more significant. Through extensive experiments under different $\epsilon$ values, we observe that the granularity chosen by comparing the average direction-preserving success probability is close to the granularity with the best performance.

\begin{algorithm}[t]
	\small 
  \caption{ATP Mechanism}\label{alg5}
  \KwIn{Trajectory $\tau$, point set $\mathcal{P}$, privacy budget $\epsilon$, discrete direction set $\mathcal{D}$}
  \KwOut{Perturbed trajectory $\hat{\tau}$}
  \tcp{Copy the trajectory, assuming $|\tau|$ is odd}
  $\tau^{\prime}=\tau$,
  $\tau^{\ast}=\tau$\;
  \tcp{Independent and pivot perturbation}
  $\epsilon_R=0.25\epsilon$, 
  $\epsilon_3=\epsilon-\epsilon_R$\;
  $\hat{\mathcal{P}}^{\ast}_{\alpha}=RTR(\tau^{\ast},\mathcal{P},\epsilon_R/2$)\;\label{alg5line5}
$\hat{\mathcal{P}}^{\prime}_{\alpha}=RTR(\tau^{\prime},\mathcal{P},\epsilon_R/2$)\;\label{alg5line6}
  $\hat{\tau}^{\ast} = Pivot(\tau^{\ast},\hat{\mathcal{P}}^{\ast}_{\alpha},\epsilon_3/2,\mathcal{D},\mathcal{F}=1)$\;\label{alg5line7}
  $\hat{\tau}^{\prime} = Pivot(\tau^{\prime},\hat{\mathcal{P}}^{\prime}_{\alpha},\epsilon_3/2,\mathcal{D},\mathcal{F}=0)$\;\label{alg5line8}
  \tcp{Find the optimal perturbed trajectory}
  Solve Equation~\ref{eq8} to obtain $\hat{\tau}$ based on $\hat{\tau}^{\ast}$ and $\hat{\tau}^{\prime}$\;
  \Return  $\hat{\tau}$\;
\end{algorithm}

\subsection{Privacy Analysis}
\label{subsec:privacy}
In this subsection, we show that the proposed TP mechanism preserves differential privacy.
\begin{theorem}\label{theorem2}
    The TP mechanism satisfies $\epsilon$-LDP.
\end{theorem}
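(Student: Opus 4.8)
The plan is to peel the TP mechanism (Algorithm~\ref{alg3}) apart into pieces and charge privacy budget only where a genuine function of the raw trajectory is randomized. TP performs three actions: (i) it solves Equation~\ref{eqprob} to fix the direction granularity $|\mathcal{D}|$; (ii) it calls $Pivot$ (Algorithm~\ref{alg2}) twice, on the copies $\tau^{\ast}$ and $\tau^{\prime}$, each time with budget $\epsilon/2$; and (iii) it combines the two perturbed copies into $\hat{\tau}$ by solving Equation~\ref{eq8}. Action (i) inspects only $\epsilon$ and the public candidate set, never the user's trajectory, so it spends no budget; action (iii) is a deterministic function of $\hat{\tau}^{\ast}$ and $\hat{\tau}^{\prime}$, hence pure post-processing. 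Therefore, by sequential composition (Theorem~\ref{theorem_sequential_composition}) together with the post-processing property, it suffices to show that one call $Pivot(\tau,\mathcal{P},\epsilon_0,\mathcal{D},\mathcal{F})$ is $\epsilon_0$-LDP; the two calls then compose to $\epsilon/2+\epsilon/2=\epsilon$. I would also observe at this point that each original point is a pivot in exactly one of the two copies (the calls use complementary flags $\mathcal{F}=1$ and $\mathcal{F}=0$) and a non-pivot in the other, so no point is charged beyond the $\epsilon/2+\epsilon/2$ of the two calls.

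Next I would open a single $Pivot$ call and list its elementary randomized steps with their budgets. With $\epsilon_d=0.75\epsilon_0$ and $\epsilon_{ind}=\epsilon_{rest}=0.125\epsilon_0$: (a) each pivot point is perturbed once by EM (Equation~\ref{eq5}) with budget $\epsilon_{ind}/|\tau|$, and there are at most $\lceil|\tau|/2\rceil$ pivots; (b) for each non-pivot point, its one or two adjacent discrete directions --- each measured \emph{from an already-perturbed neighbouring pivot}, so that the only raw quantity entering is the non-pivot point itself --- are perturbed by $k$-RR (Equation~\ref{eq6}) with budget $\epsilon_d/\bigl(2(|\tau|-1)\bigr)$ apiece, for at most $2\lceil|\tau|/2\rceil$ directions in total; (c) each non-pivot point is then perturbed once by EM with budget $\epsilon_{rest}/|\tau|$. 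Everything between these steps --- calling $GetPointDomain$ (Algorithm~\ref{alg1}) to build the restricted domain $\mathcal{P}^{j}$ from the already-perturbed pivots $\hat{\mathcal{P}}^{\tau}_{ind}$ and perturbed directions $\hat{\mathcal{D}}^{\tau}$, and the final reordering by original index --- only post-processes quantities that are already privatized, so it costs nothing. Summing with sequential composition, the pivot EMs contribute at most $\epsilon_{ind}$, the direction $k$-RRs at most $\epsilon_d$, and the non-pivot EMs at most $\epsilon_{rest}$, for a total of $\epsilon_{ind}+\epsilon_d+\epsilon_{rest}=\epsilon_0$ --- which is exactly why each elementary budget is normalized by $|\tau|$ (resp.\ $2(|\tau|-1)$).

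The step I expect to require the most care is the non-pivot perturbation (c): the EM is run over $\mathcal{P}^{j}$, which is \emph{not} a fixed set --- it depends on the perturbed directions $\hat{d}_{j-1,j},\hat{d}_{j+1,j}$ and, by Line~\ref{alg1line6} (and Lines~\ref{alg1line8} and \ref{alg1line11}) of Algorithm~\ref{alg1}, always contains $p_{j}$ itself. I would handle this by treating the construction of $\mathcal{P}^{j}$ out of $\hat{\mathcal{P}}^{\tau}_{ind}$ and $\hat{\mathcal{D}}^{\tau}$ as post-processing of already-private data, so that the only fresh budget charged to $p_{j}$ via the directions is the $k$-RR cost already counted in (b); conditioned on those perturbed directions the EM runs over a set that is fixed up to the self-inclusion of $p_{j}$, and I would argue this keeps it $(\epsilon_{rest}/|\tau|)$-LDP (the utility $u(p_j,r)=-dist(p_j,r)$ has sensitivity $\Delta u$ regardless, and $p_{j}$ always retains positive output probability --- in particular through the fallback $\mathcal{P}^{j}=\mathcal{P}$ on Line~\ref{alg1line5}). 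Composing this with the $k$-RR directions gives the per-point bound $\epsilon_d/(|\tau|-1)+\epsilon_{rest}/|\tau|$ for $p_j$, and assembling the three contributions and the two copies yields $\epsilon$-LDP for TP. The genuine soft spot in this plan is precisely that last claim --- reconciling the standard ``fixed output set'' statement of EM's privacy with a data-derived domain --- so a fully rigorous argument should treat ``perturbed directions, then restricted-domain EM'' as one $\bigl(\epsilon_d/(|\tau|-1)+\epsilon_{rest}/|\tau|\bigr)$-LDP sub-mechanism on $p_j$ and bound its likelihood ratio directly, rather than invoking plain sequential composition step by step.
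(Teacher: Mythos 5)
Your decomposition --- two $Pivot$ calls composed sequentially, each split into pivot EM perturbations, direction $k$-RR perturbations, and restricted-domain EM perturbations for the non-pivots, with everything that only touches already-perturbed quantities discharged as post-processing --- is exactly the accounting the paper's own proof performs, and your budget arithmetic reaches the same conclusion $\epsilon=\epsilon_{ind}+\epsilon_{rest}+\epsilon_{d}$. The one point where you go beyond the paper is in flagging that the non-pivot EM runs over a data-dependent domain $\mathcal{P}^{j}\ni p_{j}$; the paper's proof silently treats this as a standard EM application, so your closing caveat identifies a genuine soft spot in the published argument rather than a defect of your own.
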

\begin{proof}
        For trajectory $\tau$ (we assume that the number of points $|\tau|$ in $\tau$ is odd. The proof is similar when it is even), raw sensitive data are accessed in three ways after copying $\tau$ into $\tau^{\prime}$ and $\tau^{\ast}$: independent point perturbations, direction perturbations, and other point perturbations based on the directions.

        For the independent point perturbations, we use EM with privacy budget $\epsilon_{ind}/|\tau|$ for each point. The points with an even index in $\tau$ are selected as pivots when perturbing $\tau^\prime$ and independently perturbed. Each independent perturbation of these points satisfies $\epsilon_{ind}/|\tau|-LDP$. The other points are selected as pivots when perturbing $\tau^{\ast}$ and independently perturbed. Each independent perturbation also satisfies $\epsilon_{ind}/|\tau|$-LDP. The total number of independent perturbations is $|\tau|$. %(each half when perturbing $\tau^\prime$/$\tau^\ast$).

        In the case of direction perturbations in pivot perturbations, recall that the direction is perturbed by considering the perturbed adjacent point(s) of a target point as the origin. By the post-processing property of LDP, making use of such an origin does not consume any additional privacy budget. Each perturbation for direction $d_{i,j}$ by $k$-RR satisfies $\epsilon_{d}/2(|\tau|-1)$-LDP. The total number of direction perturbations is $2(|\tau|-1)$.

        \textcolor{black}{Finally, based on the perturbed directions and the independently perturbed pivot points, each of the other non-pivot points in $\tau^{\prime}$ and $\tau^{\ast}$ is perturbed by EM with budget $\epsilon_{rest}/|\tau|$ (i.e., point perturbations under the direction restrictions in pivot perturbations). Then, $\hat{\tau}^{\prime}$ and $\hat{\tau}^{\ast}$ are obtained, and the optimal perturbed trajectory is determined. By the post-processing property, this step does not require any additional privacy budget.}
        Thus, according to the sequential composition theorem, the entire TP mechanism satisfies $\epsilon$-LDP, where $\epsilon=\epsilon_{ind}+\epsilon_{rest}+\epsilon_{d}$.
\end{proof}
\section{Anchor-based Pivot Sampling Mechanism}
\label{Optimization}
The TP mechanism uses pivot perturbation to capture bi-directional information and combines the direction information with independent perturbation of each point in a trajectory. Pivot perturbations can restrict the spatial region of the points in the trajectory to reduce the negative impact of the large point domain to a certain extent. However, a large number of independent perturbations still depend on a large $|\mathcal{P}|$. These independent perturbations introduce considerable amount of noise and almost uniformly select points in the large $\mathcal{P}$ when the privacy budget $\epsilon$ is small, which makes the perturbed points far from the actual region of the trajectory. To mitigate this problem, we propose an \underline{a}nchor-based pivo\underline{t} sam\underline{p}ling (ATP) mechanism in this section.

\begin{algorithm}[t]
	\small 
  \caption{Restrict Trajectory Region (RTR)}
  \label{alg4}
  \KwIn{Trajectory $\tau$, point set $\mathcal{P}$, privacy budget $\epsilon_R$}
  \KwOut{Perturbed point domain $\hat{\mathcal{P}}_{\alpha}$}
  Use Equation~\ref{eq10} to calculate the trajectory anchor $\alpha$ and map it to the nearest point $p_{\alpha}$\;\label{alg4line1}
  Use Equation~\ref{eq5} with $\epsilon_1=0.25\epsilon_R$ to perturb $p_{\alpha}$ into $\hat{p}_{\alpha}$\;\label{alg4line2}
  Get $R^\tau_{max}$ and perturb it using the SW mechanism with $\epsilon_2=\epsilon_R-\epsilon_1$\;\label{alg4line3}
  Calibrate the perturbed $\hat{R}^{\tau}_{max}$ to obtain $\hat{R}$ using Equation~\ref{eq27}\;\label{alg4line4}
  Get the perturbed point set $\hat{\mathcal{P}}_{\alpha}$ of $\tau$ based on $\hat{p}_{\alpha}$ and $\hat{R}$\;\label{alg4line5}

  \Return  $\hat{\mathcal{P}}_{\alpha}$\;
\end{algorithm}

\subsection{Restricting Trajectory Region}
%When $|\mathcal{P}|$ is large, the independently perturbed points are likely to be far from the original points. 
According to the first law of geography~\cite{T1970}, the spatial region of a trajectory is likely to be relatively small when compared with the entire area of the map. The spatial region can be used to limit the domain of the points in a trajectory, which is used by our ATP mechanism to restrict the point domain of a trajectory. In contrast to the TP mechanism, the ATP mechanism (as shown in Algorithm ~\ref{alg5}) first restricts the trajectory region, i.e., the perturbation domain of a trajectory. Specifically, before perturbing each copy of the trajectory, the trajectory anchor is calculated and perturbed, and then the region size is determined based on the perturbed longest distance between the perturbed anchor and the trajectory (Lines ~\ref{alg5line5} and ~\ref{alg5line6}). After that, the trajectory can be perturbed in a sequential manner, similar to that in the TP mechanism (Lines ~\ref{alg5line7} and ~\ref{alg5line8}).

With regard to the trajectory region, ATP identifies the points that are more centralized and as close to the original region as possible while consuming an acceptable amount of the privacy budget. In what follows, we present a simple and lightweight method to determine this region. We first need to determine a way of representing a trajectory and its region. This representation must encode the information about the points in the trajectory, especially the geographic information, and the shape of the region needs to be carefully chosen to avoid privacy leakage and enhance data utility. As such, we propose an anchor-based method (see Algorithm~\ref{alg4}) that restricts each trajectory to a constant shape, i.e., a circular region. For each trajectory, the anchor $\alpha$ is first calculated and mapped to the nearest point as the anchor point $p_{\alpha}$ (Line~\ref{alg4line1}) as follows:
\begin{equation}\notag
    lat(\alpha)=\sum\nolimits_{i=1}^{|\tau|}lat(p_{i})/|\tau|\text{, }lon(\alpha)=\sum\nolimits_{i=1}^{|\tau|}lon(p_{i})/|\tau|,
\end{equation}
\begin{equation}\label{eq10}
    p_{\alpha}=\underset{p}{\arg\min}\text{ }dist(\alpha, p),
\end{equation}
where $p\in \mathcal{P}$, $\tau$ is the trajectory, $lat(p_i)$ and $lon(p_i)$ denote the latitude and longitude of $p_i$, respectively, and $dist(\cdot)$ is the Haversine distance. Then, the anchor point is perturbed using EM to obtain the perturbed point $\hat{p}_{\alpha}$ (Line~\ref{alg4line2}):
\begin{equation}\label{eq11}\notag
    Pr[\Hat{p}_{\alpha}=p]=
        \frac{exp(\frac{\epsilon u(p_{\alpha},p)}{2\Delta u})}{\sum_{p^{\prime}\in \mathcal{P}}exp({\frac{\epsilon u(p_{\alpha},p^{\prime})}{2\Delta u}})},
\end{equation}
where $p\in \mathcal{P}$, and $u(p_{\alpha},p)=-dist(p_{\alpha},p)$. The anchor $\alpha$ is mapped to $p_{\alpha}\in \mathcal{P}$ so that each possible $u(p_{\alpha},p)$ can be pre-calculated to increase the efficiency of the perturbation mechanism. For example, in Figure~\ref{fig_2}c, the anchor $\alpha$ is calculated and mapped to $p_{\alpha}$, and then $p_{\alpha}$ is perturbed to $\hat{p}_{\alpha}$ by EM. 
% \textcolor{blue}{/*One can delete the follow sentences due to the page limit.*/We choose EM because in this step the region is considered more important than the exact point. We expect that even if the obtained perturbed anchor point is not the best, which is likely when $\epsilon$ is relatively small, it is still as close as possible to the anchor point.}
% \textcolor{red}{The EM generates a result based on the quality scores, and the reported probabilities of each result decrease exponentially rapidly as the scores decrease. Therefore, the EM fits our setting.}

After the perturbed anchor point is obtained, the region of the trajectory can be determined. A simple way is to set a pre-defined and fixed value for the radius $R$. Then, the perturbation domain is obtained as follows:
\begin{equation}\label{eq12}\notag
    \hat{\mathcal{P}}_{\alpha}=\{p\in \mathcal{P}|dist(\hat{p}_{\alpha},p)\leq R\}.
\end{equation}
Since the spatial region depends on the radius $R$, it should be carefully chosen.

Intuitively, different users are likely to produce trajectories with varying area sizes. If a fixed radius $R$ is set, each trajectory will be restricted to the same granularity, resulting in loss of a considerable amount of information or introducing many noisy points in the candidate point set. To address this challenge, we propose an adaptive strategy to set different $R$ values for different trajectories.

We aim to determine $R$ values that are not too large and ensure that the region covers the points of a trajectory to the best extent possible while preserving privacy. The longest distance $R^{\tau}_{max}$ (i.e., the purple arrow in Figure~\ref{fig_2}c) between a point in the trajectory $\tau$ and the perturbed anchor point $\hat{p}_{\alpha}$ can be obtained as follows:
\begin{equation}\label{eq19}\notag
    R^{\tau}_{max}=\underset{p\in \mathcal{P}^{\tau}}{\max}\text{ }dist(\hat{p}_{\alpha}, p),
\end{equation}
where $\mathcal{P}^{\tau}$ denotes the set of the points in trajectory $\tau$. To ensure privacy, $R^{\tau}_{max}$ must be perturbed (Line~\ref{alg4line3} of Algorithm~\ref{alg4}). Since $R^{\tau}_{max}$ is the distance between the perturbed point $\hat{p}_{\alpha}\in \mathcal{P}$ and a point in the trajectory, i.e., $p\in \mathcal{P}^{\tau}\subseteq\mathcal{P}$, it is bounded by the maximum distance between $\hat{p}_{\alpha}$ and a point in $\mathcal{P}$. Therefore, we use SW mechanism, which is designed to perturb bounded continuous data. In comparison with the piecewise mechanism~\cite{LWL2020}, the SW mechanism is more concentrated, which is suitable for our task. As the input and output ranges of the SW mechanism are $[0,1]$ and $[-b,b+1]$, respectively, the distance is first normalized via $r=\frac{R^{\tau}_{max}}{\Delta R}$,
%\begin{equation}\notag
%    r=\frac{R^{\tau}_{max}}{\Delta R},
%\end{equation}
where \textcolor{black}{$\Delta R=\underset{p\in \mathcal{P}}{\max}\text{ }dist(\hat{p}_{\alpha}, p)$}. Then, $r$ is perturbed to $\hat{r}$ by Equation~\ref{eq3}. 
% \begin{equation}\label{eq21}\notag
%     Pr(\hat{r})=
%     \begin{cases}
%         \frac{e^{\epsilon}}{2be^{\epsilon}+1} &\text{if }|r-\hat{r}|\leq b,\\
%         \frac{1}{2be^{\epsilon}+1} &\text{otherwise},
%     \end{cases}
% \end{equation}
% where $b=\frac{\epsilon e^{\epsilon}-e^{\epsilon}+1}{2e^{\epsilon}(e^{\epsilon}-1-\epsilon)}$. 
Finally, the output is mapped to the real distance range via $\hat{R}^{\tau}_{max}=\frac{(\hat{r}+b)\Delta R}{2b+1}$.

%\begin{equation}\label{eq22}\notag
%    \hat{R}^{\tau}_{max}=\frac{(\hat{r}+b)\Delta R}{2b+1}.
%\end{equation}

After the perturbed radius $\hat{R}^{\tau}_{max}$ (i.e., the red arrow in Figure~\ref{fig_2}c) is obtained, the regions of different trajectories will vary adaptively. 
% It can be observed that the range of the perturbed $\hat{R}^{\tau}_{max}$ depends on the parameter $b$. The parameter $b$ is large if $\epsilon$ is relatively small, and there is a high probability of the perturbed value being uniformly distributed in a larger neighborhood that is approximately centered around the original value. 
Due to the property of the SW mechanism, the perturbed $\hat{R}^{\tau}_{max}$ is likely to be too small or large when given a small $\epsilon$. Here we make a few observations. First, the distribution of the region size of human mobility trajectory can be approximated by a power-law distribution~\cite{GHB2008}, implying that there should be only a few large sizes. On the other hand, the perturbed radius may become smaller due to the large randomness from a small $\epsilon$. In the case of a trajectory region, although a larger perturbed radius is likely to cover more points and leads to a larger perturbation domain, it can increase the coverage rate of the points in the original trajectory while being still relatively small in comparison with the entire area size. That is, the benefit of a larger radius outweighs the disadvantage of the noise it introduces when the original radius is small. To this end, we guide the perturbation by a calibration that ``pulls'' the perturbed radius close to a ``center'' when $\epsilon$ is small.

\subsection{Calibrating the Radius}
The radius calibration includes two steps: the center of calibration is first determined, and then the moving step size is calculated and the radius is calibrated based on $\epsilon$.

Recall that $R^{\tau}_{max}$ is bounded and is determined by the farthest point from $\hat{p}_\alpha$ in the trajectory. To identify a proper value $\eta$ for the calibration center based on $\epsilon$ and $\hat{R}^\tau_{max}$, we aim to amplify the impact of the points satisfying a possible range of the farthest point from $\hat{p}_\alpha$. Since the input domain of SW is $[0,1]$, we sample some test values uniformly to avoid privacy leakage and calculate the possible ranges. Let $\mathcal{V}=\{v_{k}|v_{k}=0.1\times k, k=0,1,...,10\}$ denote the test value set, $l_{k}=v_{k}-b$, and $u_{k}=v_{k}+b$, we have
\begin{small}
\begin{equation}\label{eq23}\notag
    \mathcal{V}^{R}=\{v_k\in \mathcal{V}|l_{k}\leq \frac{(2b+1)\hat{R}^{\tau}_{max}}{\Delta R}-b\leq u_{k}, 0\leq k\leq 10\},
\end{equation}
\end{small}
\begin{equation}\label{eq24}\notag
    l=\underset{v_k\in \mathcal{V}^{R}}{\min}\text{ }v_k,
    u=\underset{v_k\in \mathcal{V}^{R}}{\max}\text{ }v_k,
\end{equation}

\begin{equation}\label{eq26}\notag
    \mathcal{P}^{R}=\{p\in \mathcal{P}|l\leq \frac{(2b+1)dist(\hat{p}_{\alpha},p)}{\Delta R}-b\leq u\}.
\end{equation}
%where $b=\frac{\epsilon e^{\epsilon}-e^{\epsilon}+1}{2e^{\epsilon}(e^{\epsilon}-1-\epsilon)}$, $\Delta R=\max_{p\in \mathcal{P}} dist(\hat{p}_\alpha,p)$, and $dist(\cdot)$ is a distance metric.
The values in $\mathcal{V}$ are used for testing, and for each value $v_{k}$, the output range with high probability $q=\frac{e^{\epsilon}}{2be^{\epsilon}+1}$ is used to determine whether it covers $\hat{R}^{\tau}_{max}$. The candidate points, i.e., the points whose distance to $\hat{p}_{\alpha}$ are within the desired range, are obtained. Then, the weighted average distance is calculated as the calibration center that assigns larger weights to the candidate points as follows:
\begin{equation}\label{eq30}\notag
    \eta=\frac{\sum_{p\in \mathcal{P}^{R}}{q\cdot dist(\hat{p}_{\alpha},p)}+\sum_{p^{\prime}\in \mathcal{P}-\mathcal{P}^{R}}{(1-q)dist(\hat{p}_{\alpha},p^{\prime})}}{q|\mathcal{P}^{R}|+(1-q)|\mathcal{P}-\mathcal{P}^{R}|}.
\end{equation}

After obtaining the calibration center $\eta$, we must determine how to ``pull'' the perturbed radius $\hat{R}^{\tau}_{max}$ close to it. As per the property of the SW mechanism, the perturbed value becomes more concentrated as the privacy budget $\epsilon$ increases. Therefore, the calibration effect should be reduced with increasing $\epsilon$. The perturbed radius $\hat{R}^{\tau}_{max}$ is calibrated as follows (Line~\ref{alg4line4} of Algorithm~\ref{alg4}):
\begin{equation}\label{eq27}
    \hat{R}=\hat{R}^{\tau}_{max} + \xi e^{-\epsilon},
\end{equation}
\begin{equation}\label{eq28}
    \xi=(\eta-\hat{R}^{\tau}_{max})\cdot \frac{1}{1-e^{-\beta/2}},
\end{equation}
\begin{equation}\label{eq29}
\beta=
    \begin{cases}
        \frac{\eta-\hat{R}^{\tau}_{max}}{\eta}&\text{ if }\hat{R}^{\tau}_{max}\leq \eta,\\
        \frac{\hat{R}^{\tau}_{max}-\eta}{\Delta R-\eta}&\text{ otherwise}.
    \end{cases}
\end{equation}

In Equation~\ref{eq27}, the calibration term (i.e., the last term) is composed of two factors: the decay factor $e^{-\epsilon}$ and the moving factor $\xi$. The decay factor controls the effect of the calibration term by decreasing its influence as $\epsilon$ grows. The moving factor $\xi$ depends on the gap between the perturbed radius and the calibration center. As shown in Equation~\ref{eq28}, the moving factor will be negative when the perturbed radius $\hat{R}^{\tau}_{max}$ is larger than $\eta$, and positive otherwise. The gap between $\hat{R}^{\tau}_{max}$ and $\eta$ is multiplied by a factor in the range of $[0,1]$ to avoid obtaining an extreme value or a constant value. Therefore, $\beta$ is calculated using Equation~\ref{eq29}, which normalizes the gap between $\hat{R}^{\tau}_{max}$ and $\eta$ to $[0,1]$. Then $\beta /2$ is fed into an activation function $sigmoid(\cdot)$, which provides the non-linear property and prevents the moving step size from having an exact value of $|\eta-\hat{R}^{\tau}_{max}|$. After the calibration, we can determine the perturbed point set of the input trajectory (Line~\ref{alg4line5} of Algorithm~\ref{alg4}).
% The restriction of trajectory region is shown in Algorithm~\ref{alg4}, and the ATP mechanism is shown in Algorithm~\ref{alg5}. [TODO: I suggest mentioning these two algorithms in advance and quoting the line numbers as you introduce their detailed procedures.]

\subsection{Analysis of Privacy and Utility}
In this subsection, we prove that the proposed ATP mechanism satisfies $\epsilon$-LDP.

\begin{theorem}\label{theorem4}
    The ATP mechanism satisfies $\epsilon$-LDP.
\end{theorem}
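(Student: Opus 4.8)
The plan is to decompose the ATP mechanism (Algorithm~\ref{alg5}) exactly as the proof of Theorem~\ref{theorem2} decomposes TP, to charge a privacy budget to each genuinely randomized sub-step, and to argue that everything else is post-processing. On a trajectory $\tau$, ATP copies $\tau$ into $\tau^{\prime}=\tau^{\ast}=\tau$ and then touches the raw data in exactly four places: $RTR(\tau^{\ast},\mathcal{P},\epsilon_R/2)$, $RTR(\tau^{\prime},\mathcal{P},\epsilon_R/2)$, $Pivot(\tau^{\ast},\hat{\mathcal{P}}^{\ast}_{\alpha},\epsilon_3/2,\mathcal{D},\mathcal{F}=1)$, and $Pivot(\tau^{\prime},\hat{\mathcal{P}}^{\prime}_{\alpha},\epsilon_3/2,\mathcal{D},\mathcal{F}=0)$; the closing $\arg\min$ of Equation~\ref{eq8} reads only the already-released $\hat{\tau}^{\ast}$ and $\hat{\tau}^{\prime}$. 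Since the two copies are literally equal to $\tau$, duplication does not save any budget, so I would bound the total cost by sequential composition (Theorem~\ref{theorem_sequential_composition}) over these four accesses.

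First I would handle one $RTR$ call (Algorithm~\ref{alg4}) invoked with input budget $\epsilon_R/2$. The anchor $\alpha$ of Equation~\ref{eq10} and its snapped point $p_{\alpha}$ are a deterministic function of $\tau$, so perturbing $p_{\alpha}$ by EM with $\epsilon_1=0.25(\epsilon_R/2)$ gives an $\epsilon_1$-LDP release of $\hat{p}_{\alpha}$ with respect to $\tau$; here I would explicitly use that $\epsilon$-LDP demands $e^{\epsilon}$-indistinguishability for \emph{every} pair of inputs, so pre-composing an LDP mechanism with an arbitrary deterministic map preserves its parameter. Conditioned on $\hat{p}_{\alpha}$, the normalizer $\Delta R=\max_{p\in\mathcal{P}}dist(\hat{p}_{\alpha},p)$ is a constant and $R^{\tau}_{max}=\max_{p\in\mathcal{P}^{\tau}}dist(\hat{p}_{\alpha},p)\in[0,\Delta R]$ is again a deterministic function of $\tau$, hence $r=R^{\tau}_{max}/\Delta R\in[0,1]$ is a valid SW input; applying SW with $\epsilon_2=(\epsilon_R/2)-\epsilon_1=0.75(\epsilon_R/2)$ therefore gives an $\epsilon_2$-LDP release, and the de-normalization to $\hat{R}^{\tau}_{max}$ is post-processing. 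By sequential composition each $RTR$ call is $(\epsilon_1+\epsilon_2)=(\epsilon_R/2)$-LDP.

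For the two $Pivot$ calls I would invoke the proof of Theorem~\ref{theorem2}: $Pivot(\cdot,\cdot,\beta,\mathcal{D},\mathcal{F})$ is $\beta$-LDP for any input budget $\beta$. The only new wrinkle is that ATP passes the restricted domain $\hat{\mathcal{P}}^{\ast}_{\alpha}$ (resp.\ $\hat{\mathcal{P}}^{\prime}_{\alpha}$) in place of $\mathcal{P}$; because this domain is a deterministic function of the output of the preceding $RTR$ call, this is adaptive composition (restricting the output alphabet never weakens the guarantee), so each $Pivot$ call with budget $\epsilon_3/2$ remains $(\epsilon_3/2)$-LDP. Summing the four accesses by Theorem~\ref{theorem_sequential_composition} gives $\epsilon_R/2+\epsilon_R/2+\epsilon_3/2+\epsilon_3/2=\epsilon_R+\epsilon_3=0.25\epsilon+0.75\epsilon=\epsilon$, and Equation~\ref{eq8} costs nothing by post-processing, so ATP satisfies $\epsilon$-LDP.

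The step I expect to be the main obstacle — and would write most carefully — is showing that the radius-calibration block consumes no privacy budget: I would verify that every quantity it uses ($b$, $q$, $\Delta R$, the test grid $\mathcal{V}$, the bounds $l,u$, the sets $\mathcal{V}^{R}$ and $\mathcal{P}^{R}$, the center $\eta$ of Equation~\ref{eq30}, and $\xi,\beta$ in Equations~\ref{eq27}--\ref{eq29}) is a function only of public data ($\mathcal{P}$, $\epsilon$) and the two already-released noisy values $\hat{p}_{\alpha}$ and $\hat{R}^{\tau}_{max}$, with no fresh query to the raw $\tau$, so the entire calibration and the final ball $\hat{\mathcal{P}}_{\alpha}=\{p\in\mathcal{P}\,|\,dist(\hat{p}_{\alpha},p)\leq\hat{R}\}$ are post-processing of the $RTR$ outputs. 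A secondary subtlety worth spelling out is that the $RTR\!\to\!Pivot$ pipeline and the two-copy schedule are \emph{adaptive}, which is exactly why the bound must come from sequential/adaptive composition rather than from treating the four accesses as independent.
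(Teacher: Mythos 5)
Your proposal is correct and follows essentially the same decomposition as the paper's own proof: EM on the anchor and SW on the radius inside each RTR call, the calibration and domain restriction treated as post-processing of already-released noisy values, the two Pivot calls charged via Theorem~\ref{theorem2}, and sequential composition summing to $\epsilon_R+\epsilon_3=\epsilon$. Your added care in spelling out why the calibration block and the $RTR\!\to\!Pivot$ handoff are post-processing/adaptive composition is a slightly more explicit rendering of the same argument, not a different route.
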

\begin{proof}
    The ATP mechanism restricts the spatial regions of the two trajectory copies in a sequential manner. The perturbed anchors $\hat{p}_{\alpha}^{\prime}$ and $\hat{p}_{\alpha}^{\ast}$ are obtained using EM, each satisfying $\epsilon_{1}/2$-LDP. Then $\hat{R}^{\prime}$ and $\hat{R}^{\ast}$ are calculated to restrict the perturbation domains $\hat{\mathcal{P}}_{\alpha}^{\prime}$ and $\hat{\mathcal{P}}_{\alpha}^{\ast}$ of the points in $\hat{\tau}^\prime$ and $\hat{\tau}^\ast$, respectively. The anchors are perturbed, and then the largest distance between the perturbed $\hat{p}_{\alpha}^{\prime}$ (or $\hat{p}_{\alpha}^{\ast}$) and the points in $\tau^\prime$ and $\tau^\ast$ is calculated as $\hat{R}^{\prime}$ (or $\hat{R}^{\ast}$) by the SW mechanism, which satisfies $\epsilon_{2}/2$-LDP. The calibration operation does not access any raw data because the test values are independent of the trajectory and the distances used to identify $\mathcal{P}^{R}$ are based on the perturbed $\hat{p}_{\alpha}^{\prime}$ (or $\hat{p}_{\alpha}^{\ast}$). By the post-processing property, the calibration operation does not require additional privacy budget. Next, $\mathcal{P}$ is replaced once with $\hat{\mathcal{P}}_{\alpha}^{\prime}$ and then with $\hat{\mathcal{P}}_{\alpha}^{\ast}$. The TP mechanism that satisfies $\epsilon_{3}$/2-LDP is then invoked. Again, by the post-processing theorem, the replacements do not consume additional privacy budget. Finally, the sequential composition theorem ensures that the ATP mechanism satisfies $\epsilon$-LDP, where $\epsilon=\epsilon_{1}+\epsilon_{2}+\epsilon_{3}$.
\end{proof}

To understand the impact of different $R$ values, we perform a utility analysis by estimating the upper bound of the probability that the anchor points $p_{\alpha}$ in the trajectory are not covered by the circle whose radius is $R$ and origin is the perturbed $\hat{p}_{\alpha}$.

\begin{theorem}\label{theorem3}
    Let $\mathcal{C}_{\tau}(p,R)=\{p^{\prime}\in \mathcal{P}|dist(p,p^{\prime})\leq R\}$ and $u_{x}$ be the minimum score that satisfies $u(p_{\alpha},p)=-dist(p_{\alpha},p)> -R$. If $R=2t\Delta u/\epsilon-u_x$, we have
    \begin{equation}\notag
        Pr[p_{\alpha}\notin \mathcal{C}_{\tau}(\hat{p}_{\alpha},R)]\leq \frac{|\mathcal{P}-\mathcal{C}_{\tau}(p_{\alpha},R)|}{|\mathcal{C}_{\tau}(p_{\alpha},R)|}exp(-t).
    \end{equation}
\end{theorem}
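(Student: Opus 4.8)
The plan is to turn the probabilistic statement into a deterministic one about the ball centered at the \emph{true} anchor $p_{\alpha}$, and then bound the exponential-mechanism mass that escapes that ball. The first step exploits the symmetry of the Haversine distance: $p_{\alpha}\notin\mathcal{C}_{\tau}(\hat{p}_{\alpha},R)$ says $dist(\hat{p}_{\alpha},p_{\alpha})>R$, which is the same as $dist(p_{\alpha},\hat{p}_{\alpha})>R$, i.e.\ $\hat{p}_{\alpha}\notin\mathcal{C}_{\tau}(p_{\alpha},R)$. Since $R=2t\Delta u/\epsilon-u_{x}\ge 0$ (note $u_{x}=-dist(p_{\alpha},\cdot)\le 0$ and $t\ge 0$), the set $\mathcal{C}_{\tau}(p_{\alpha},R)$ is a fixed, nonempty subset of $\mathcal{P}$ --- it contains $p_{\alpha}$ itself --- so I can write $\mathrm{Pr}[p_{\alpha}\notin\mathcal{C}_{\tau}(\hat{p}_{\alpha},R)]=\sum_{p\in\mathcal{P}-\mathcal{C}_{\tau}(p_{\alpha},R)}\mathrm{Pr}[\hat{p}_{\alpha}=p]$, where the summation range no longer depends on the random draw.

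Next I would substitute the exponential-mechanism distribution used to perturb the anchor (Equation~\ref{eq4}, with utility $u(p_{\alpha},p)=-dist(p_{\alpha},p)$), expressing the probability as the ratio $\bigl(\sum_{p\in\mathcal{P}-\mathcal{C}_{\tau}(p_{\alpha},R)}w_{p}\bigr)\big/\bigl(\sum_{p'\in\mathcal{P}}w_{p'}\bigr)$ with $w_{p}=\exp\!\left(\tfrac{\epsilon\,u(p_{\alpha},p)}{2\Delta u}\right)$. The denominator I would lower-bound by keeping only the terms with $p'\in\mathcal{C}_{\tau}(p_{\alpha},R)$; by the definition of $u_{x}$ as the minimum utility score among points inside the ball, each such term is at least $\exp\!\left(\tfrac{\epsilon u_{x}}{2\Delta u}\right)$, so the denominator is at least $|\mathcal{C}_{\tau}(p_{\alpha},R)|\exp\!\left(\tfrac{\epsilon u_{x}}{2\Delta u}\right)$. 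The numerator I would bound termwise: every $p$ outside the ball has $u(p_{\alpha},p)=-dist(p_{\alpha},p)<-R$, hence $w_{p}<\exp\!\left(-\tfrac{\epsilon R}{2\Delta u}\right)$, giving a numerator of at most $|\mathcal{P}-\mathcal{C}_{\tau}(p_{\alpha},R)|\exp\!\left(-\tfrac{\epsilon R}{2\Delta u}\right)$.

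Dividing these two bounds gives
\[
  \mathrm{Pr}[p_{\alpha}\notin\mathcal{C}_{\tau}(\hat{p}_{\alpha},R)]\le\frac{|\mathcal{P}-\mathcal{C}_{\tau}(p_{\alpha},R)|}{|\mathcal{C}_{\tau}(p_{\alpha},R)|}\,\exp\!\left(-\frac{\epsilon(R+u_{x})}{2\Delta u}\right),
\]
and plugging in $R=2t\Delta u/\epsilon-u_{x}$, i.e.\ $R+u_{x}=2t\Delta u/\epsilon$, collapses the exponent to $-t$, which is exactly the claimed inequality.

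The argument is essentially a bookkeeping exercise on the exponential-mechanism formula, so I do not expect a genuine obstacle; the places that need a little care are (i) getting the symmetry reduction right so that the index set of the escaping sum is deterministic, (ii) the boundary case $dist(p_{\alpha},p)=R$ --- for a generic radius $R=2t\Delta u/\epsilon-u_{x}$ the finite set $\mathcal{P}$ has no point exactly at distance $R$, so $\mathcal{C}_{\tau}(p_{\alpha},R)$ coincides with $\{p:dist(p_{\alpha},p)<R\}$ and the minimum score over it is precisely $u_{x}$; otherwise one may perturb $R$ by an arbitrarily small amount --- and (iii) checking the arithmetic so that the $R+u_{x}$ term cancels the $2\Delta u/\epsilon$ factor and leaves exactly $e^{-t}$.
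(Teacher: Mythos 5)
Your proof is correct and follows essentially the same route as the paper's: reduce $p_{\alpha}\notin\mathcal{C}_{\tau}(\hat{p}_{\alpha},R)$ to $\hat{p}_{\alpha}\notin\mathcal{C}_{\tau}(p_{\alpha},R)$ by symmetry of the distance, bound the exponential-mechanism numerator by $|\mathcal{P}-\mathcal{C}_{\tau}(p_{\alpha},R)|\exp(-\epsilon R/2\Delta u)$ and the denominator from below by $|\mathcal{C}_{\tau}(p_{\alpha},R)|\exp(\epsilon u_{x}/2\Delta u)$, then substitute $R=2t\Delta u/\epsilon-u_{x}$. Your extra care about the boundary case $dist(p_{\alpha},p)=R$ (where the closed ball in $\mathcal{C}_{\tau}$ versus the strict inequality in the definition of $u_{x}$ could momentarily clash) is a refinement the paper's proof silently skips.
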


% \begin{proof}\allowdisplaybreaks[2]
%     \begin{align}
%         \notag&Pr[p_{\alpha}\notin \mathcal{C}_{\tau}(\hat{p}_{\alpha},R)]\\
%         \notag&=Pr[\hat{p}_{\alpha}\notin \mathcal{C}_{\tau}(p_{\alpha},R)]\\
%         \notag&=Pr[u(p_{\alpha},\hat{p}_{\alpha})\leq -R]\\
%         \notag&=\frac{\sum_{p_{c}\in \mathcal{P}-\mathcal{C}_{\tau}(p_{\alpha},R)}exp(\frac{\epsilon u(p_{\alpha},p_{c})}{2\Delta u})}{\sum_{p\in \mathcal{P}}exp(\frac{\epsilon u(p_{\alpha},p)}{2\Delta u})}\\
%         \notag&\leq \frac{|\mathcal{P}-\mathcal{C}_{\tau}(p_{\alpha},R)|exp(\frac{\epsilon (-R)}{2\Delta u})}{|\mathcal{C}_{\tau}(p_{\alpha},R)|exp(\frac{\epsilon u_{x}}{2\Delta u})}\\
%         \notag&= \frac{|\mathcal{P}-\mathcal{C}_{\tau}(p_{\alpha},R)|}{|\mathcal{C}_{\tau}(p_{\alpha},R)|}exp(-\frac{\epsilon(R+u_x)}{2\Delta u}).\\
%         \notag&\text{Let }R=\frac{2t\Delta u}{\epsilon}-u_x\text{, then }Pr[p_{\alpha}\notin \mathcal{C}_{\tau}(p_{\alpha},R)]\leq \\
%         \notag&\frac{|\mathcal{P}-\mathcal{C}_{\tau}(p_{\alpha},R)|}{|\mathcal{C}_{\tau}(p_{\alpha},R)|}exp(-t).
%     \end{align}
% \end{proof}
\begin{proof}\allowdisplaybreaks[2]
	We have the probability 
    \begin{align}
        \notag&Pr[p_{\alpha}\notin \mathcal{C}_{\tau}(\hat{p}_{\alpha},R)]=Pr[\hat{p}_{\alpha}\notin \mathcal{C}_{\tau}(p_{\alpha},R)] = Pr[u(p_{\alpha},\hat{p}_{\alpha})\leq -R]\\
        \notag&=\frac{\sum_{p_{c}\in \mathcal{P}-\mathcal{C}_{\tau}(p_{\alpha},R)}exp(\frac{\epsilon u(p_{\alpha},p_{c})}{2\Delta u})}{\sum_{p\in \mathcal{P}}exp(\frac{\epsilon u(p_{\alpha},p)}{2\Delta u})} \leq \frac{|\mathcal{P}-\mathcal{C}_{\tau}(p_{\alpha},R)|exp(\frac{\epsilon (-R)}{2\Delta u})}{|\mathcal{C}_{\tau}(p_{\alpha},R)|exp(\frac{\epsilon u_{x}}{2\Delta u})}\\
        \notag&= \frac{|\mathcal{P}-\mathcal{C}_{\tau}(p_{\alpha},R)|}{|\mathcal{C}_{\tau}(p_{\alpha},R)|}exp(-\frac{\epsilon(R+u_x)}{2\Delta u}).
    \end{align}
	Let $R=\frac{2t\Delta u}{\epsilon}-u_x$, then we have
	\begin{align}\notag
		Pr[p_{\alpha}\notin \mathcal{C}_{\tau}(\hat{p}_{\alpha},R)]\leq 
		\frac{|\mathcal{P}-\mathcal{C}_{\tau}(p_{\alpha},R)|}{|\mathcal{C}_{\tau}(p_{\alpha},R)|}exp(-t),
	\end{align}
	which completes the proof.
\end{proof}

According to Theorem~\ref{theorem3}, for a certain $R$, the probability of an anchor point $p_{\alpha}$ being not covered by the circle with the origin at $\hat{p}_{\alpha}$ and radius $R$ decreases exponentially as the radius $R$ increases. For the sake of data utility, a small probability can reduce the error of anchor point selection. Furthermore, a small $R$ can avoid resulting in an overly large perturbation domain and introducing excessive noise. Hence the selection of $R$ becomes a trade-off with respect to the privacy budget $\epsilon$. Besides, this selection of $R$ is also related to the point distribution of a dataset, i.e., the minimum score $u_x$ that satisfies $u(p_{\alpha},p)=-dist(p_{\alpha},p)> -R$. A smaller $u_x$ results in a tighter bound in the theorem. For this reason, we propose an adaptive approach, as described in Sections 5.1 and 5.2, to find a reasonable $R$ to improve data utility.

\section{Experiments}\label{experiment}
% To validate the effectiveness of the TP (without the restriction of the regions of trajectories) and ATP (with adaptive calibration region size) mechanisms, we evaluate their performance with both real-world and synthetic datasets. 
% In particular, we study the effects of different parameters in our mechanisms and validate the proposed direction granularity strategy.
\subsection{Experimental Setting}
\subsubsection{Datasets}
In the experiments, we use three real-world and one synthetic datasets, namely, NYC, CHI, CLE, and CPS. NYC consists of check-in trajectories in New York City extracted from the Foursquare dataset~\cite{YZZ2014}, while CHI and CLE, extracted from the Gowalla dataset~\cite{CML2011}, consist of check-in trajectories in Chicago and Cleveland, respectively.\footnote{We select the points in the approximate range of [87.4W—88W, 41.6N—42N] for Chicago and [122.46W—122.9W, 45.4N—45.6N] for Cleveland.} 
% Gowalla is a location-based social networking website on which users share their locations by checking in~\cite{CML2011}. 
We consider the 1,000 most popular points as $\mathcal{P}$ to generate CHI and CLE, and the 2,000 most popular POIs in NYC. 
% The resulted numbers of trajectories in CHI and CLE are 3,162 and 2,794, respectively. 
\textcolor{black}{For a fair comparison, we adopt the same preprocessing steps used in the previous study~\cite{CCF2021}. We randomly delete the points that appear within a 10-minute duration in each trajectory until only one point remains.} If the time interval between any two adjacent points in a trajectory exceeds three hours, we split it into two trajectories. %and treat them as from two users.
After these preprocessing steps, we obtain 7,951, 3,162, and 2,794 trajectories in NYC, CHI, and CLE, respectively. For CPS, \textcolor{black}{we follow the previous study~\cite{CCF2021} to generate trajectories} on the campus of the University of British Columbia\footnote{https://github.com/UBCGeodata/ubcv-buildings}. We take 262 campus buildings as $\mathcal{P}$ and generate 4,000 trajectories.

\subsubsection{Baselines and Parameter Setting}\label{para}
% Although many studies have examined trajectory data collection in the LDP setting, most of them only satisfy a relaxed privacy model of LDP. 
% Most studies that have examined trajectory data collection in the LDP setting only satisfy a relaxed privacy model of LDP. 
The only study that satisfies pure $\epsilon$-LDP is NGRAM mechanism~\cite{CCF2021}, which perturbs POI trajectories by incorporating external knowledge. As aforementioned, it is often difficult to acquire such external knowledge in practice, which is the key motivation of our paper. Hence, we consider NGRAM mechanism without any additional knowledge as a baseline and set the grid granularity to $3$ or $4$ for different datasets. Another baseline is %However, our setting does not include external knowledge; therefore, we compare the mechanisms proposed in this study with two baselines:
a direct application of the exponential mechanism (referred to as EXP). %and the NGRAM mechanism (NGRAM)~\cite{CCF2021} without any additional knowledge. The exponential mechanism
It perturbs each point in a trajectory by using the same utility function used in the mechanisms proposed in this study, i.e., $-dist(\cdot)$. The last baseline is CGM~\cite{BYX2021}, a state-of-the-art mechanism for streaming data collection under $(\epsilon,\delta)$-LDP. We normalize the latitude and longitude of each point in the same way as in the previous study~\cite{BYX2021}, by setting $\delta=10^{-2}$ or $10^{-1}$, and $C=0.1$. For all the mechanisms, we use the Haversine distance as the distance metric. 
% For the NGRAM mechanism, we use only the spatial information to construct bi-grams because category or time information is not available, and we set the grid granularity to 3 or 4 for different datasets.

As for the privacy budget allocation scheme in the ATP mechanism, $\epsilon^\prime=\epsilon^\ast=\frac{\epsilon}{2}$ are used to perturb $\tau^{\prime}$ and $\tau^{\ast}$ respectively. For $\epsilon^\prime$, $\frac{\epsilon^{\prime}}{4}$ is used to determine the region. As the region size plays a more significant role in determining the trajectory region, a quarter of $\frac{\epsilon^\prime}{4}$ is used to perturb the trajectory anchor while the other three quarters are used to perturb the radius. The remaining budget (i.e., $\frac{3\epsilon^\prime}{4}$) is used to perturb the directions and points in $\tau^\prime$. As the directions have a larger impact on the perturbations of trajectories, three quarters of $\frac{3\epsilon^\prime}{4}$ are uniformly divided to perturb the directions while another quarter is uniformly divided to perturb the points. The allocation of $\epsilon^\ast$ for perturbing $\tau^\ast$ is the same as $\tau^\prime$. As for the TP mechanism, we use the same budget allocation strategy for the perturbations of directions and points in the ATP mechanism. All the mechanisms are executed $5$ times and the average is plotted.

%As for the privacy budget allocation scheme in the ATP mechanism, $\epsilon^\prime=\epsilon^\ast=\frac{\epsilon}{2}$ are used to perturb $\tau^{\prime}$ and $\tau^{\ast}$, while $\frac{\epsilon^{\prime}}{4}$ (resp. $\frac{\epsilon^{\ast}}{4}$) is used to determine the region of trajectory $\tau^\prime$ (resp. $\tau^\ast$). As the region size plays a more significant role in determining the trajectory region, a quarter of $\frac{\epsilon^\prime}{4}$ (resp. $\frac{\epsilon^\ast}{4}$) is used to perturb the trajectory anchor of $\tau^\prime$ (resp. $\tau^\ast$) and the remaining three quarters of $\frac{\epsilon^\prime}{4}$ (resp. $\frac{\epsilon^\ast}{4}$) are used to perturb the radius of $\tau^\prime$ (resp. $\tau^\ast$). The remaining budget (i.e., $\frac{3\epsilon^\prime}{4}$ or $\frac{3\epsilon^\ast}{4}$) is used to perturb the directions and points in $\tau^\prime$ (resp. $\tau^\ast$). As the directions have a larger impact on the perturbations of trajectories, three quarters of $\frac{3\epsilon^\prime}{4}$ (resp. $\frac{3\epsilon^\ast}{4}$) are uniformly divided to perturb the directions in $\tau^\prime$ (resp. $\tau^\ast$) and the remaining one quarter of $\frac{3\epsilon^\prime}{4}$ (resp. $\frac{3\epsilon^\ast}{4}$) is uniformly divided to perturb the points in $\tau^\prime$ (resp. $\tau^\ast$). As for the TP mechanism, we use the same budget allocation strategy for the perturbations of directions and points in the ATP mechanism. All the mechanisms are executed $5$ times and the average is plotted.

\subsection{Results}

\subsubsection{Measures}
We evaluate the utility of the mechanisms using two measures adopted in the previous study~\cite{CCF2021}. The first measure is the mean normalized error (NE), which is the normalized distance between each point of a perturbed trajectory and the corresponding point of the original trajectory:
\begin{equation}\notag
    NE=\frac{1}{|\mathcal{T}|}\sum\nolimits_{i=1}^{|\mathcal{T}|}\frac{1}{|\tau_{i}|}\sum\nolimits_{j=1}^{|\tau_{i}|}dist(\hat{p}_{j}, p_{j}),
\end{equation}
where $|\mathcal{T}|$ is the number of trajectories, and $dist(\cdot)$ denotes the Haversine distance. \textcolor{black}{We normalize the results by the maximum distance in a dataset to make the results easier to understand.} The other measure is the preservation range query (PRQ), which evaluates whether each point of the perturbed trajectory is within the $\delta$ (km) range of the corresponding true point:
\begin{equation}\notag
    PRQ=\frac{1}{|\mathcal{T}|}\sum\nolimits_{i=1}^{|\mathcal{T}|}\frac{1}{|\tau_{i}|}\sum\nolimits_{j=1}^{|\tau_{i}|}\pi
(p_{j},\hat{p}_{j},\delta)\times 100\%,
\end{equation}
\begin{equation}\notag
\pi
(p_{j},\hat{p}_{j},\delta)=
    \begin{cases}
        1&\text{ if } \ dist(\hat{p}_j,p_j)\leq \delta,\\
        0&\text{ otherwise},
    \end{cases}
\end{equation}
where $dist(\cdot)$ is also measured by the Haversine distance. A larger PRQ value indicates better performance.

\begin{figure}[t]
    \centering
    % \vspace{-0.2in}
    \includegraphics[width=0.95\columnwidth]{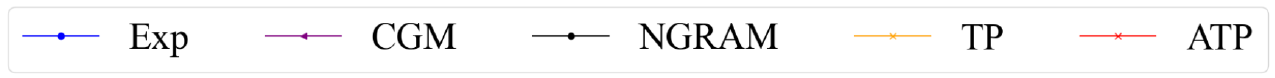} \\
    \subfloat[NYC]{
    \includegraphics[width=0.48\columnwidth]{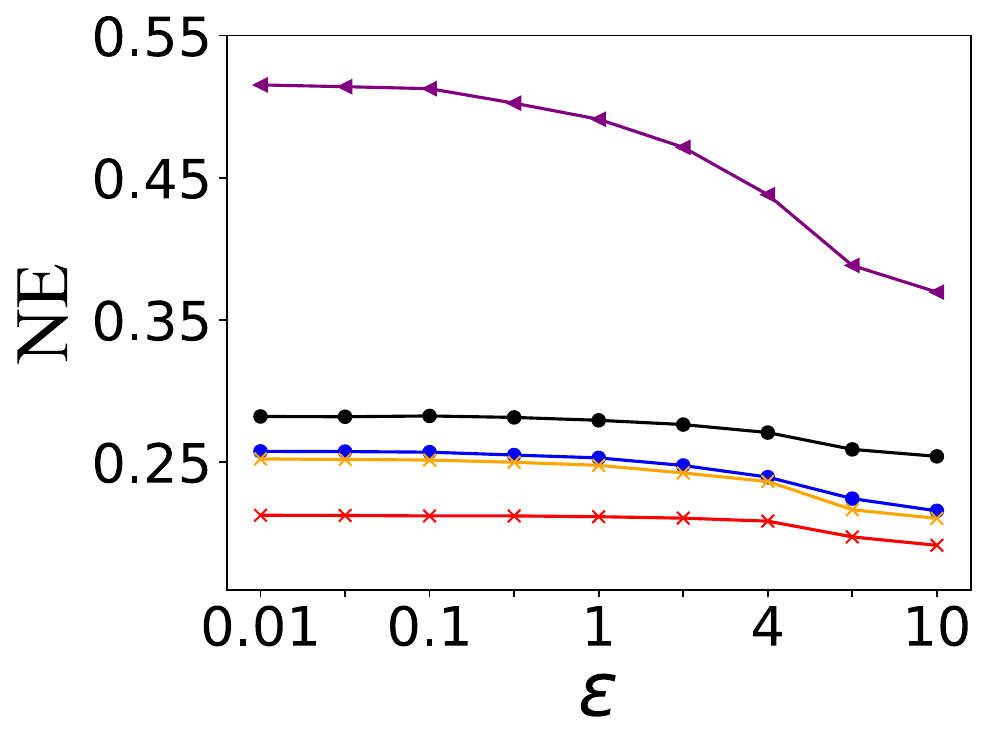}
    }\hfill
    \subfloat[CHI]{
    \includegraphics[width=0.48\columnwidth]{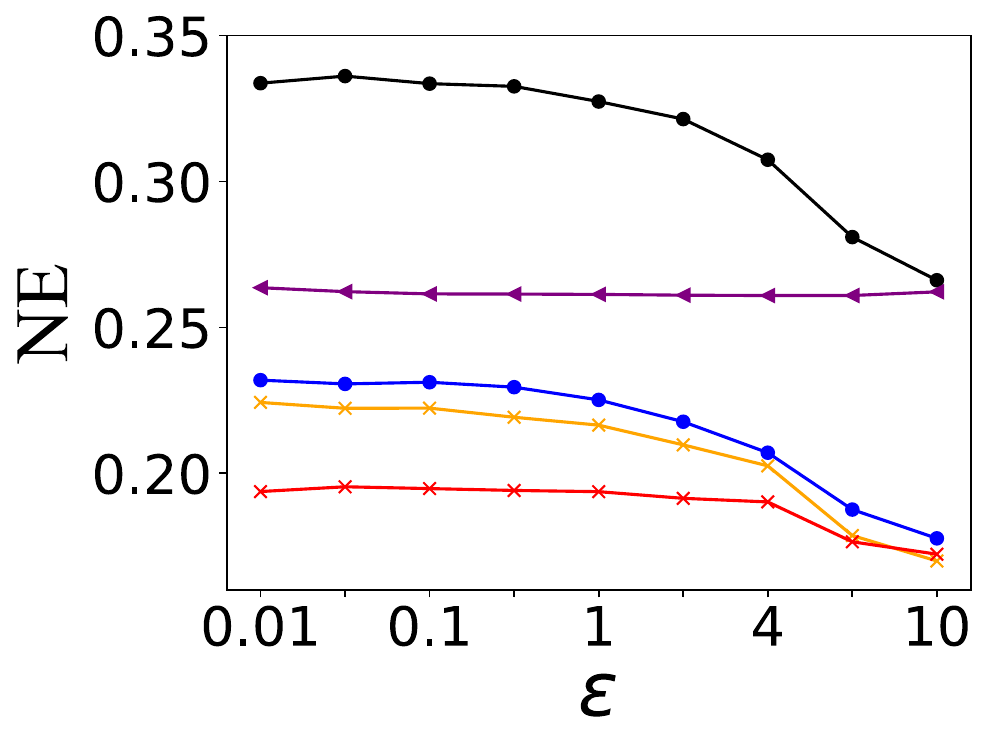}
    }\hfill
    \subfloat[CLE]{
    \includegraphics[width=0.48\columnwidth]{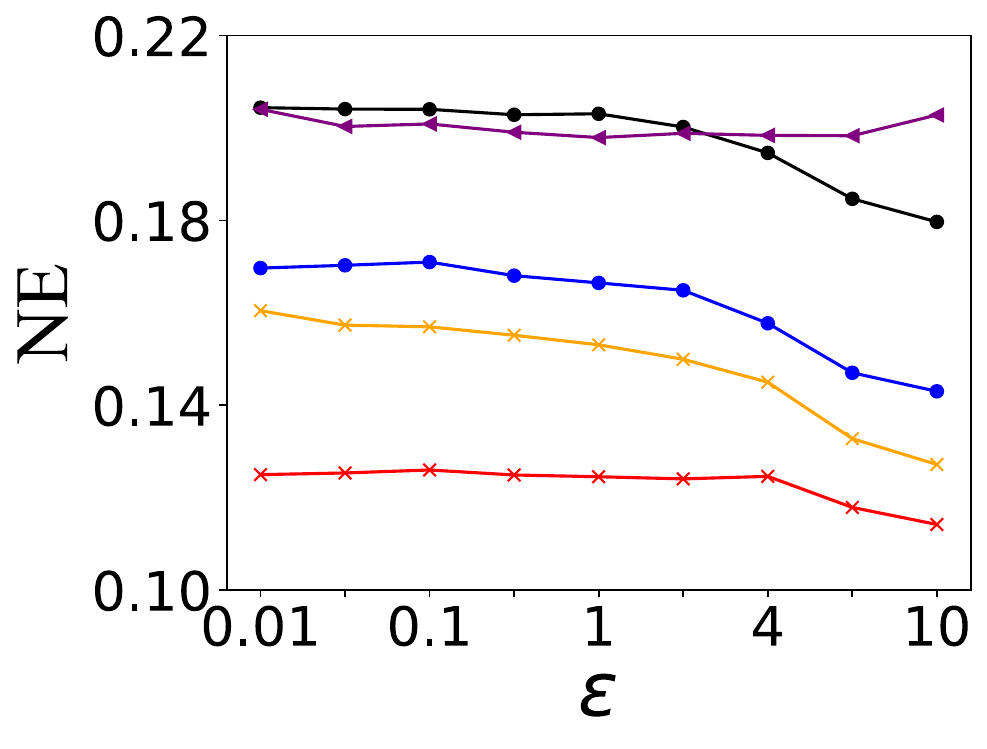}
    }\hfill
    \subfloat[CPS]{
    \includegraphics[width=0.48\columnwidth]{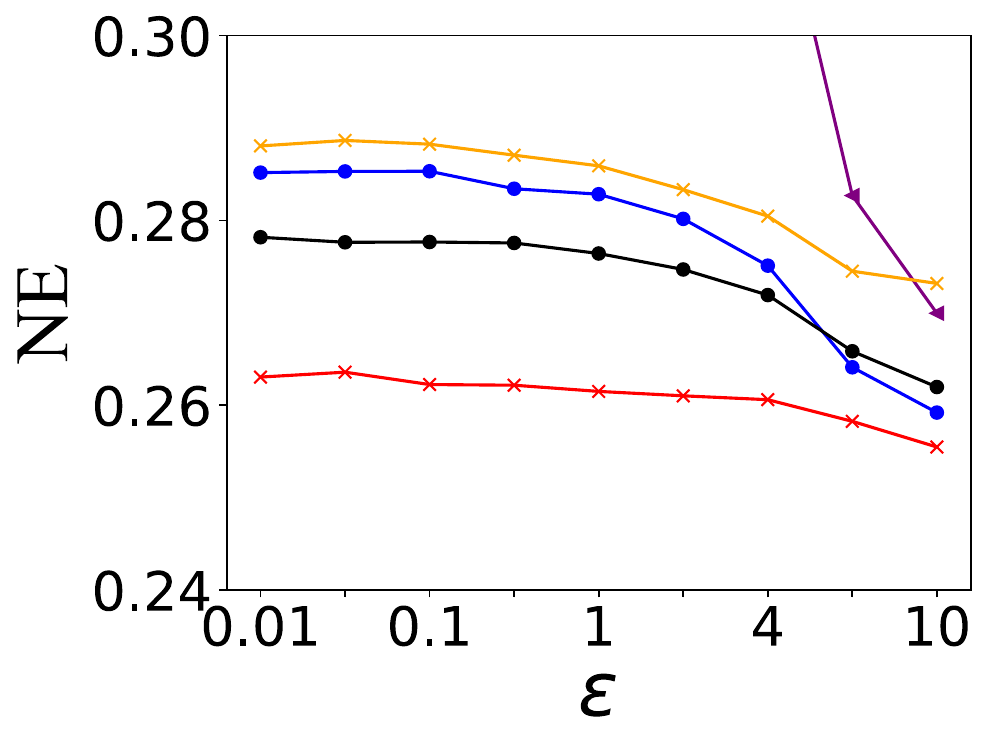}
    }
    % \vspace{-0.1in}
    \caption{Mean normalized errors (NEs) for different methods when varying privacy budget $\epsilon$.}
    \label{fig1}
% \vspace{-0.1in}
\end{figure}

\subsubsection{Evaluation of Utility}
We first compare our proposed mechanisms with the baseline mechanisms in terms of NE. The results are presented in Figure~\ref{fig1}. The proposed mechanisms outperform the baselines in most situations. From Figure~\ref{fig1}, we observe that the TP mechanism (i.e., the orange line) outperforms the baselines on all real-world datasets. Although the EM used to sample points in the TP mechanism has difficulty in choosing points near the original point when $\epsilon$ is small, the use of the direction constraint can effectively enhance the utility. In NGRAM, since external knowledge is not used, the hierarchical decomposition in this mechanism cannot achieve significant performance improvement, and the large bi-gram universe adversely affects the utility. In CGM, the unbounded noise largely impacts the resulting utility. Although the TP mechanism performs worse than the baselines on the synthetic dataset, the ATP mechanism (represented by the red line) performs significantly better than other mechanisms, showing that the trajectory region constraint is more helpful when the points in a dataset are distributed more uniformly, such as the CPS dataset.
% The proposed ATP mechanism (represented by the red line) outperforms all other mechanisms. 
The better performance is partially due to the fact that the SW mechanism used to perturb the region size $R$ is more accurate when $\epsilon$ is large. The proposed adaptive calibration method is also advantageous to eliminate extreme values of region sizes due to the perturbation of the SW mechanism. As a result, even when $\epsilon$ is small, ATP can restrict trajectory regions better and achieve better performance.

With regard to the PRQ, the ATP mechanism outperforms the baseline mechanisms on all three real-world datasets as shown in Figure~\ref{fig2}. We set $\delta$ to 1, 2, or 4, and observe the changes in $PRQ$ when $\epsilon$ varies from 1 to 10 on different datasets (we set $\delta$ to 0.25, 0.5, or 1 for the CPS dataset due to the small entire area). The value of $PRQ$ increases as the privacy budget increases. For the CPS dataset, although the ATP mechanism performs worse than EXP when $\delta=0.25$, it performs better when $\delta$ becomes larger, especially when $\epsilon$ is small. 
% For other three datasets, although the entire areas are large, the performance of ATP is better than baselines.

\begin{figure}[t]
    \centering
    % \vspace{-0.2in}
    \includegraphics[width=0.95\columnwidth]{NE/legend_new.pdf} \\
    \subfloat[NYC $\delta$=1]{
    	\includegraphics[width=0.29\columnwidth]{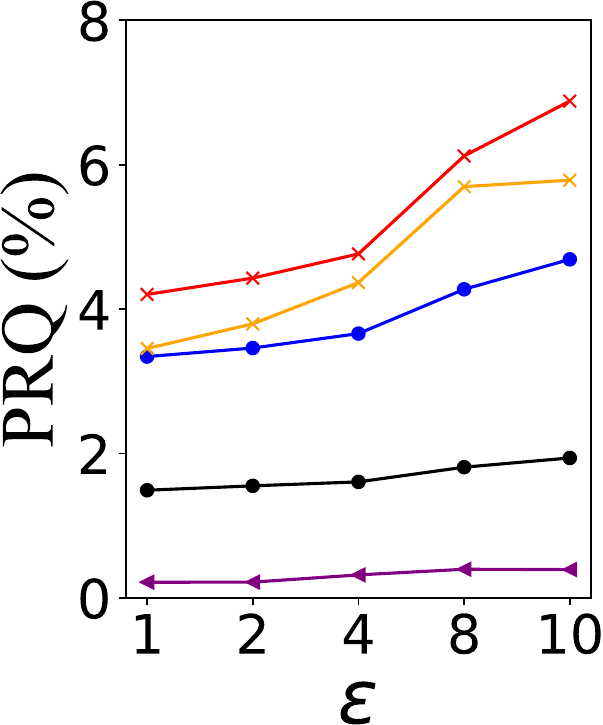}
    }\hfill
	\subfloat[NYC $\delta$=2]{
		\includegraphics[width=0.31\columnwidth]{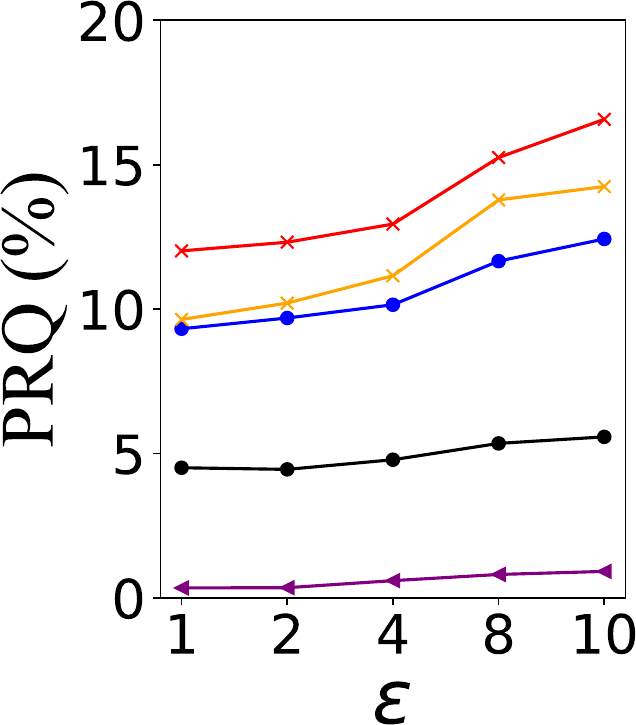}
	}\hfill
	\subfloat[NYC $\delta$=4]{
		\includegraphics[width=0.31\columnwidth]{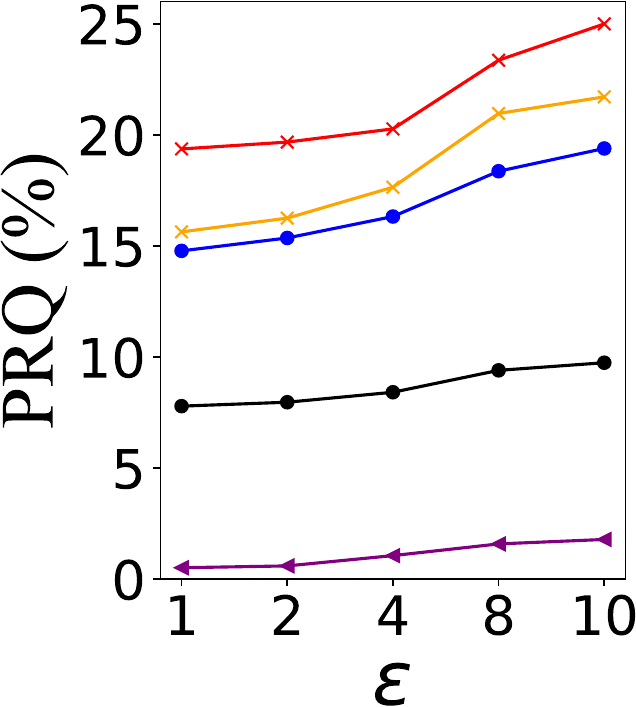}
	}\hfill

    \subfloat[CHI $\delta$=1]{
    	\includegraphics[width=0.3\columnwidth]
    % {}
    {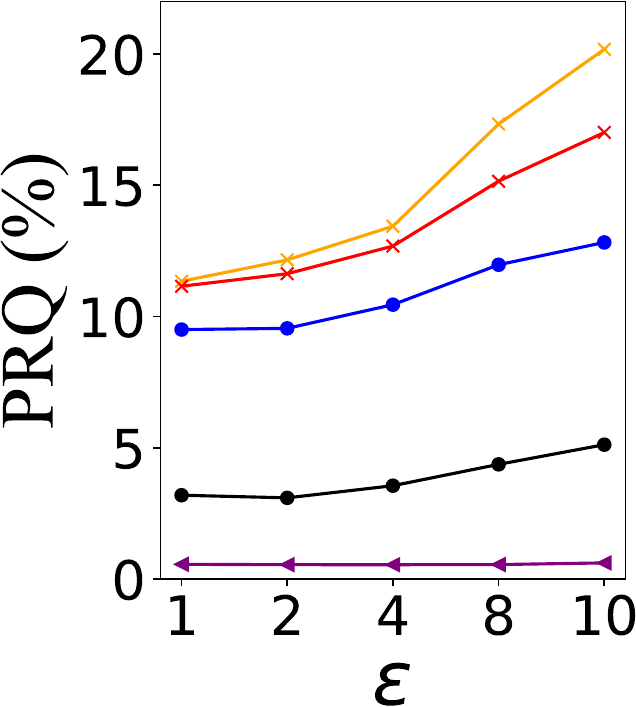}
    }\hfill
	\subfloat[CHI $\delta$=2]{
		\includegraphics[width=0.3\columnwidth]
		% {px/px_NYC_delta=2_new.pdf}
		{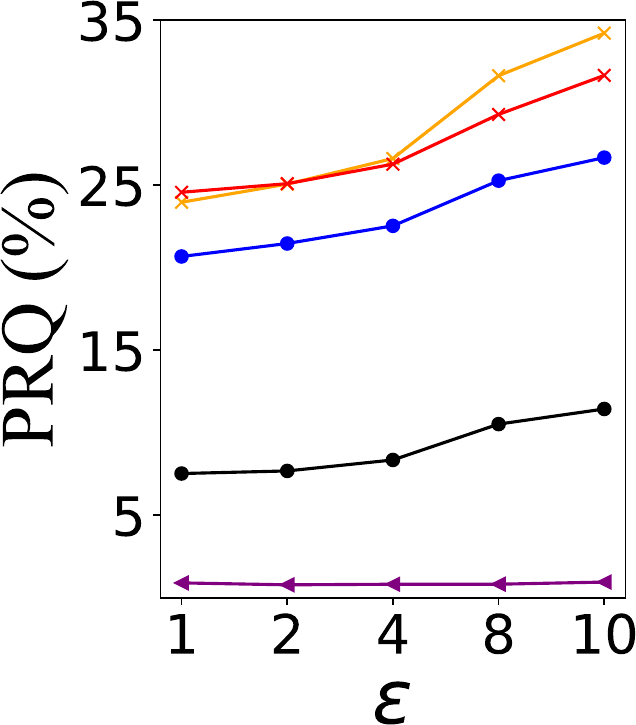}
	}\hfill
	\subfloat[CHI $\delta$=4]{
		\includegraphics[width=0.3\columnwidth]
		{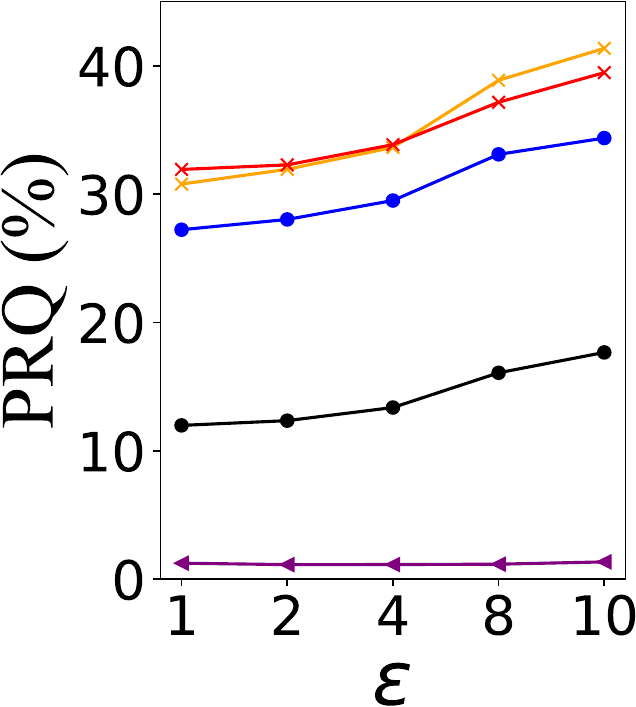}
	}\hfill

    \subfloat[CLE $\delta$=1]{
    	\includegraphics[width=0.3\columnwidth]
    % {}
    {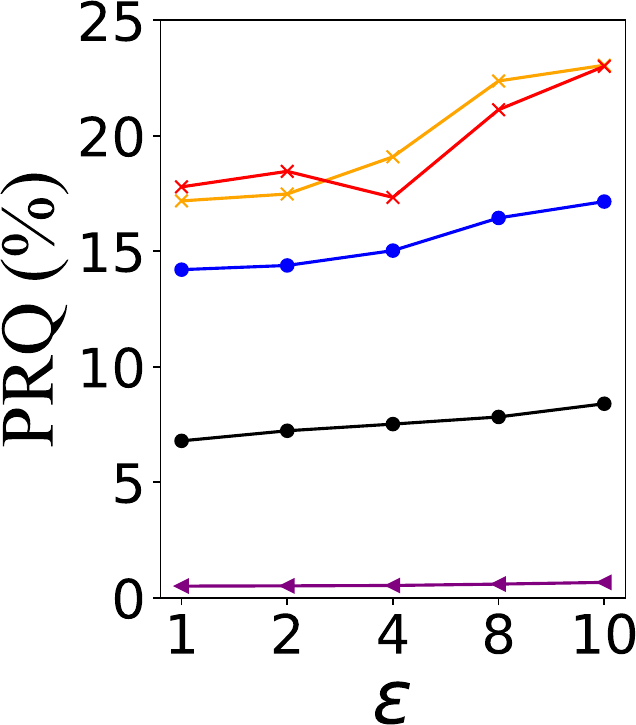}
    }\hfill
	\subfloat[CLE $\delta$=2]{
		\includegraphics[width=0.3\columnwidth]
		% {px/px_NYC_delta=4_new.pdf}
		{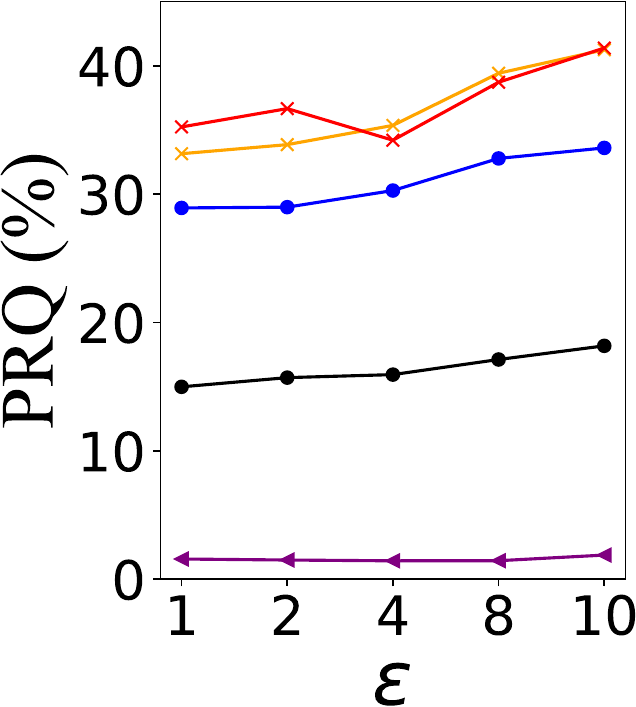}
	}\hfill
	\subfloat[CLE $\delta$=4]{
		\includegraphics[width=0.3\columnwidth]
		{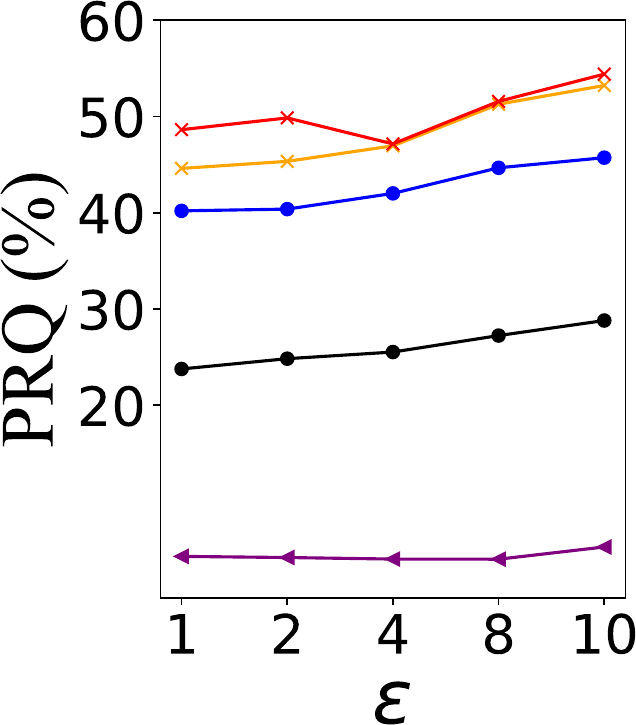}
	}\hfill

    \subfloat[CPS $\delta$=0.25]{
    	\includegraphics[width=0.32\columnwidth]
    % {}
    {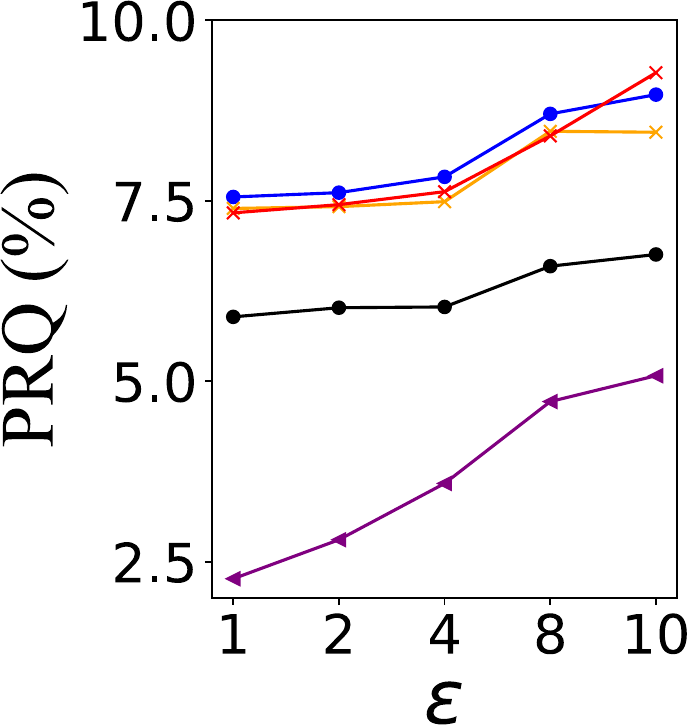}
    }\hfill
    \subfloat[CPS $\delta$=0.5]{
    	\includegraphics[width=0.3\columnwidth]
    % {px/px_CHI_delta=1.pdf}
    {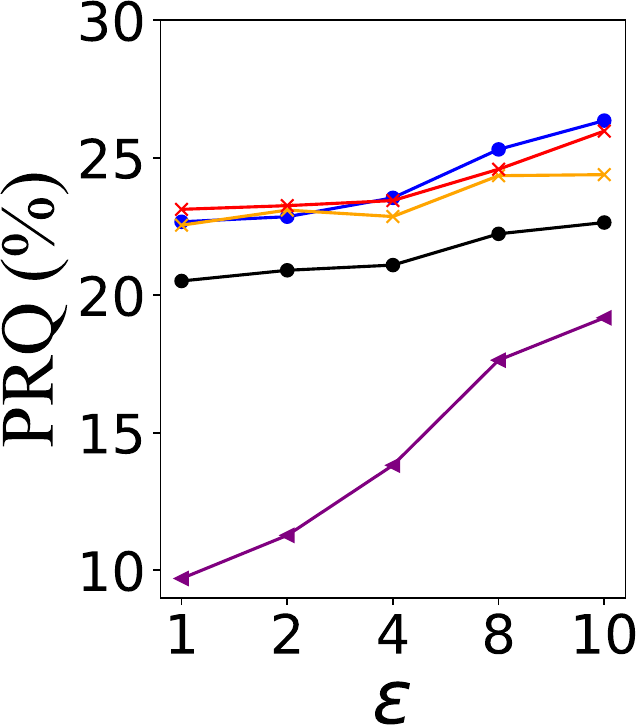}
    }\hfill
    \subfloat[CPS $\delta$=1]{
    	\includegraphics[width=0.3\columnwidth]
    {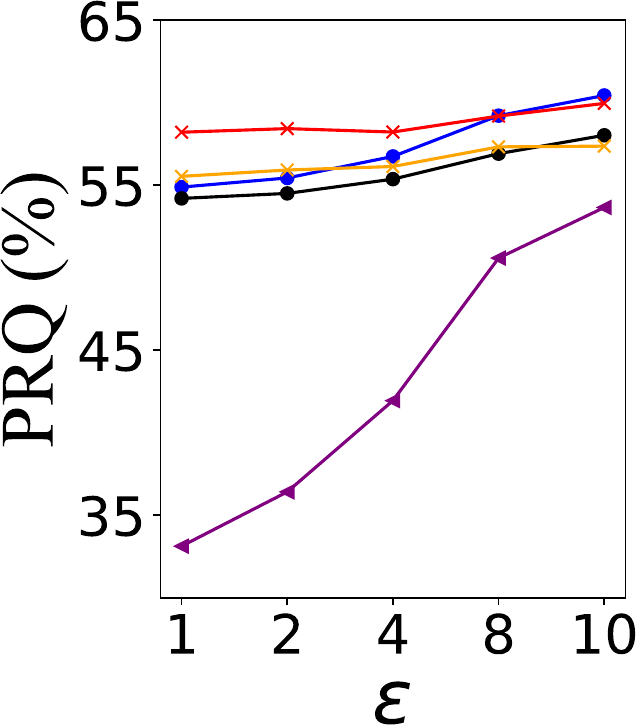}
    }
    % \vspace{-0.1in}
    \caption{Preservation range queries (PRQs) vs. privacy budget $\epsilon$ under different ranges $\delta$.}
    \label{fig2}
    %\vspace{-0.1in}
\end{figure}

\begin{figure}[t]
    \centering
    % \vspace{-0.3in}
    \subfloat[NYC]{\includegraphics[width=0.31\columnwidth]{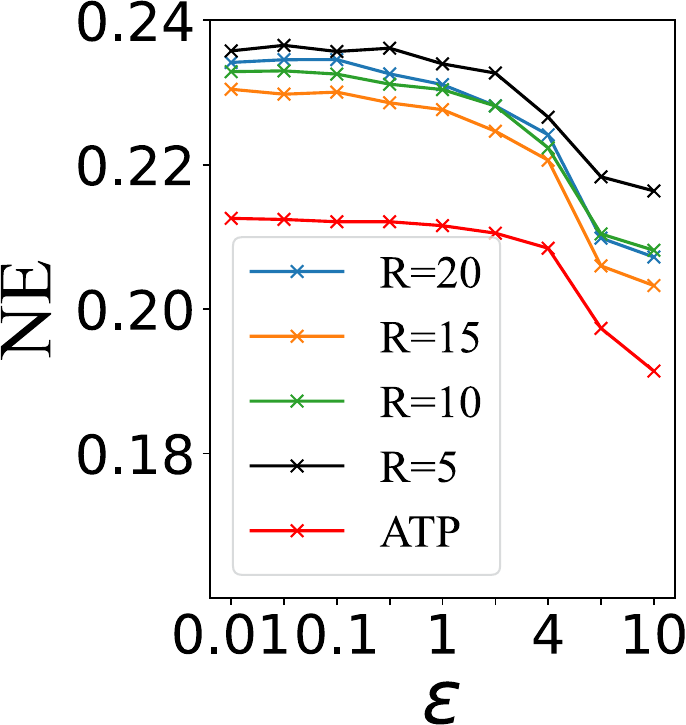}
    }\hfill
 %    \subfloat[CPS]{
	% 	\includegraphics[width=0.31\columnwidth]{}
	% }\hfill
    \subfloat[NYC]{\includegraphics[width=0.31\columnwidth]{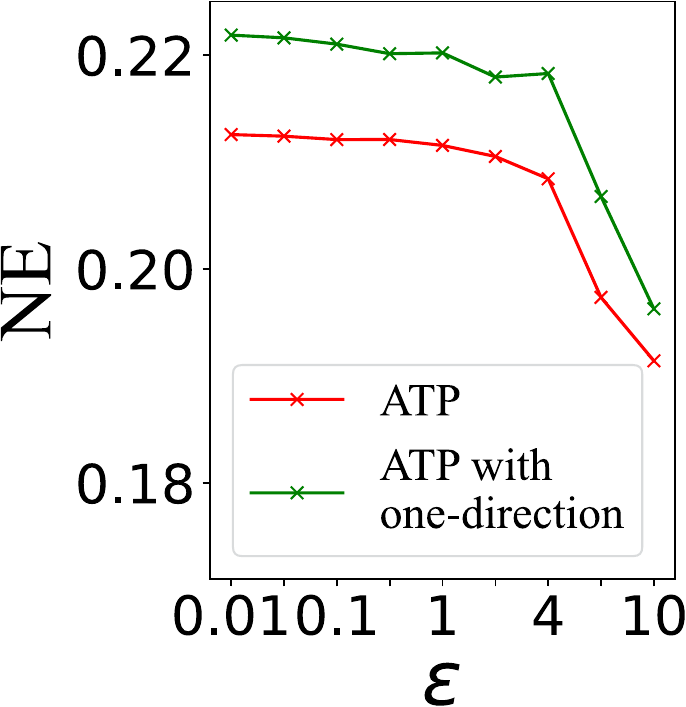}
    }\hfill
    % \subfloat[CPS]{\includegraphics[width=0.31\columnwidth]{}
    % }\hfill
   \subfloat[NYC]{\includegraphics[width=0.31\columnwidth]{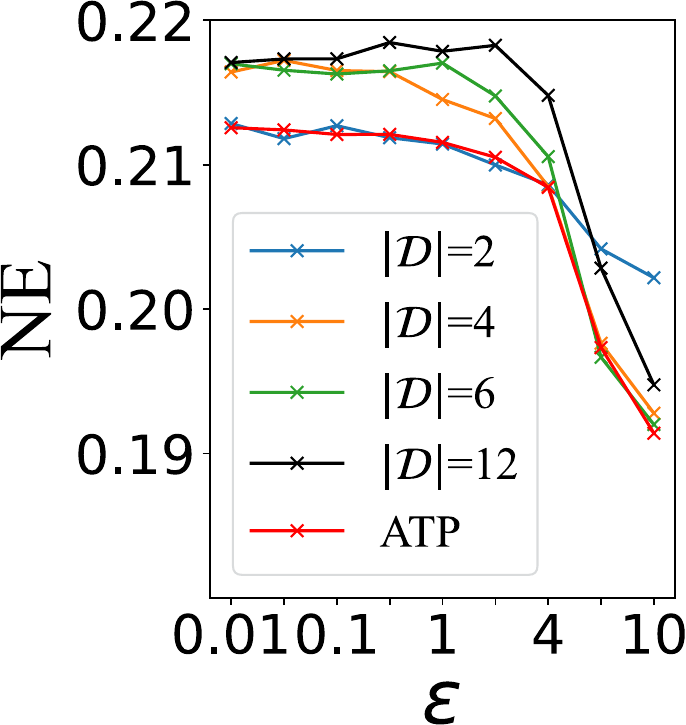}
   }\hfill
   \subfloat[CLE]{\includegraphics[width=0.31\columnwidth]
    {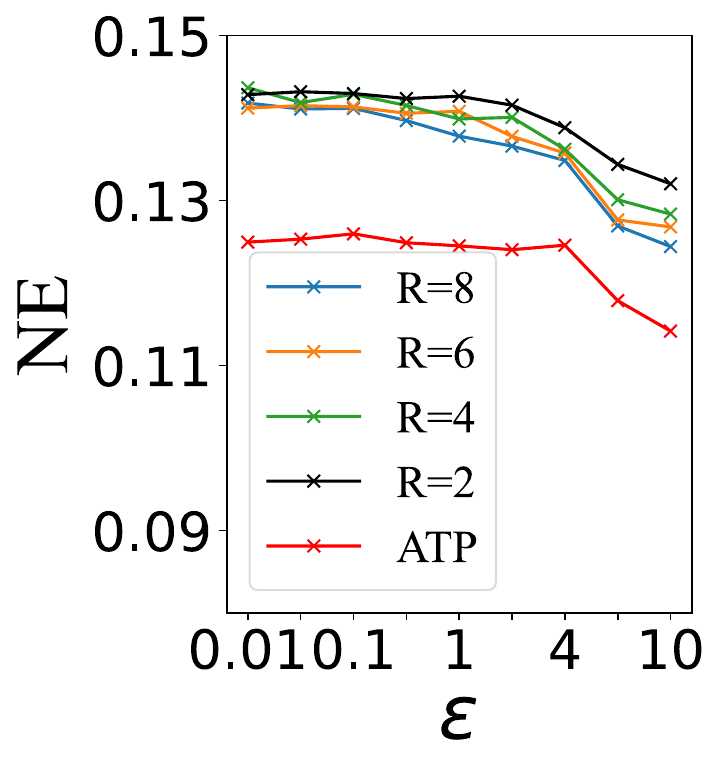}
    }\hfill
    \subfloat[CLE]{\includegraphics[width=0.31\columnwidth]
    {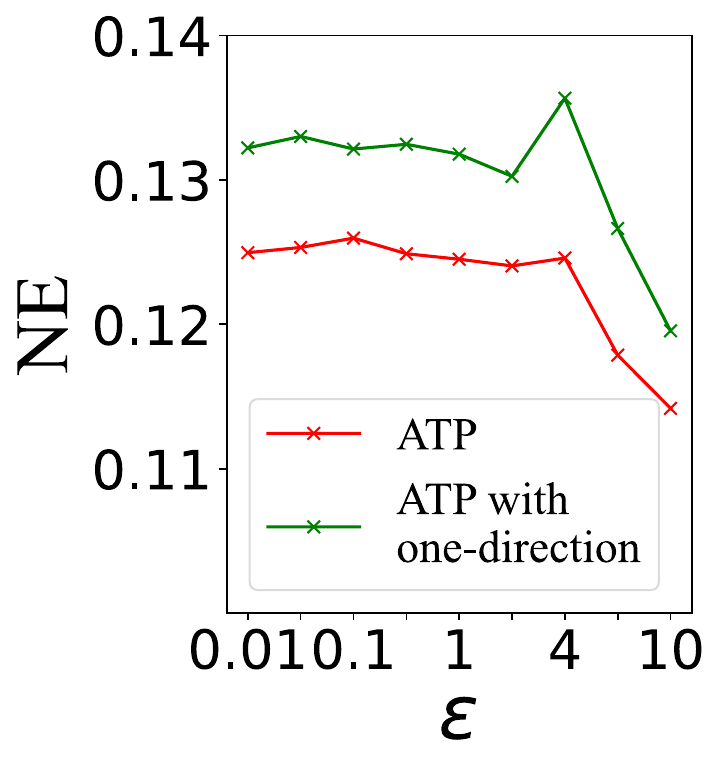}
    }\hfill
    \subfloat[CLE]{\includegraphics[width=0.31\columnwidth]
    {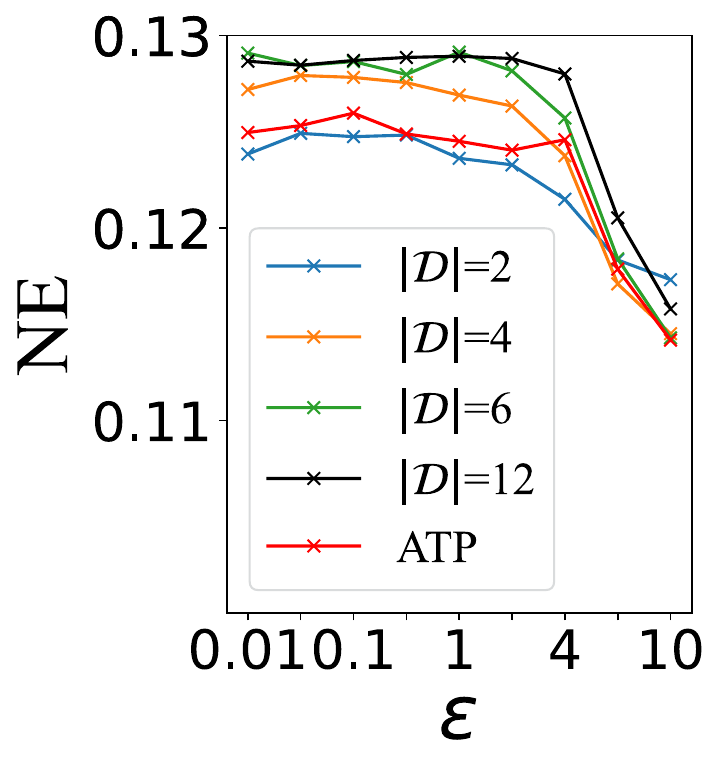}}
    % \subfloat[CPS]{\includegraphics[width=0.31\columnwidth]{}
    % }
    % \vspace{-0.15in}
    \caption{Mean normalized errors (NEs) of different radius $R$, uni-directional information and direction granularities under varying privacy budget $\epsilon$. 
    % Figures 7a and 7b evaluate adaptive and different fixed $R$ values. Figures 7c and 7d compare bi-directional and uni-directional information. Figures 7e and 7f investigate adaptive and different fixed direction granularities.
    }
%    \vspace{-0.1in}
    \label{fig4}
\end{figure}

\begin{figure}[t]
    \centering
    \subfloat[Size of $\mathcal{P}$]{
    \includegraphics[width=0.31\columnwidth]
    {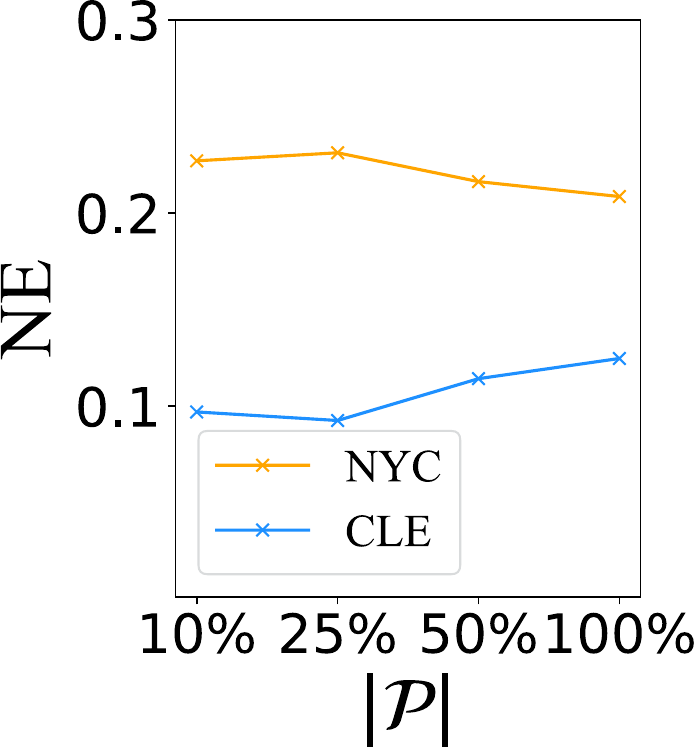}
    }\hfill
    \subfloat[Trajectory Length]{
    \includegraphics[width=0.32\columnwidth]
    {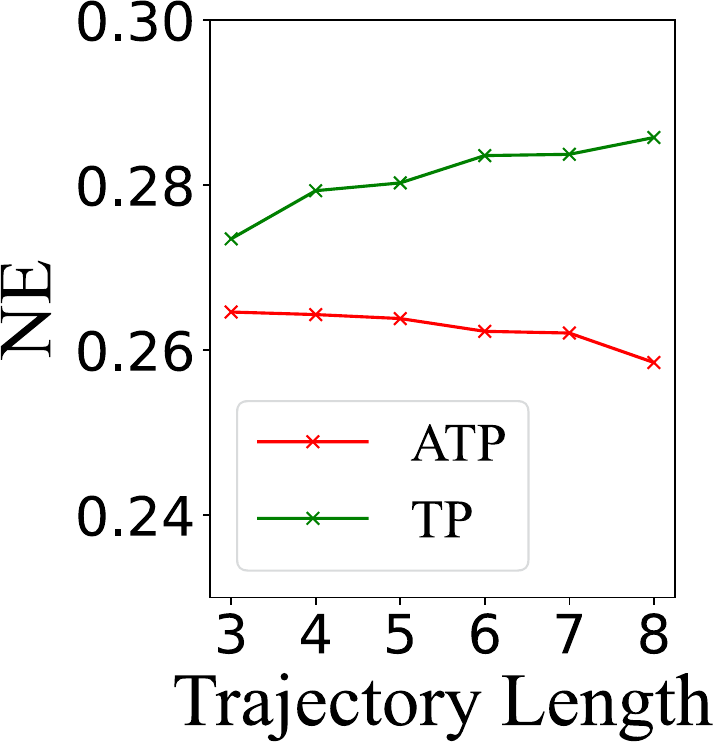}
    }\hfill
    \subfloat[Reach Bound]{
    \includegraphics[width=0.31\columnwidth]
    {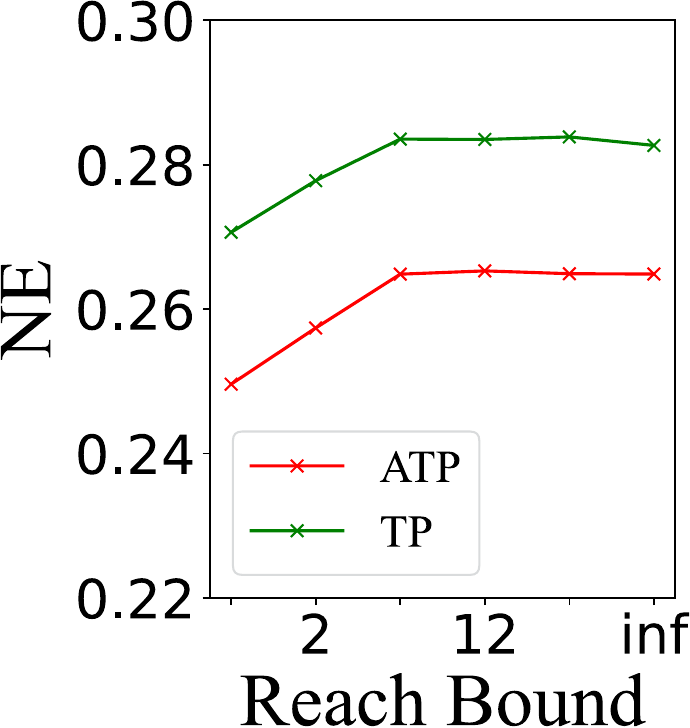}
    }
    % \vspace{-0.1in}
    \caption{Mean normalized errors (NEs) under different sizes of $\mathcal{P}$, trajectory length and reach bound ($\epsilon=4$). }
%    \vspace{-0.1in}
    \label{fig5}
\end{figure}

% $R\in\{0.75, 1, 1.25, 1.5\}$ for CPS
\subsubsection{Effects of Parameters}
Due to the space limitation, we only report the results on the NYC and CLE datasets. Similar results and conclusions can be observed and drawn on the CHI and CPS datasets. First, we study the performance of the ATP mechanism with varying fixed $R$ values. We experiment with different $R$ values on different datasets: $R\in\{5, 10, 15, 20\}$ for NYC and $R\in\{2, 4, 6, 8\}$ for CLE. As shown in Figures~\ref{fig4}a and \ref{fig4}d, 
% we can observe that the experimental results are consistent with our intuition that the ATP mechanism should show worse performance with a fixed larger $R$ than that with a fixed smaller $R$ when $\epsilon$ is relatively small. As the privacy budget used to sample points is very small, the perturbed points and the perturbed trajectory anchors could be far from the true points/anchors. 
as $\epsilon$ increases, a larger $R$ value can cover more points in the trajectories whose original regions are large, leading to better performance. We observe that the value of $R$ that results in the best performance is not always the largest or smallest on different datasets, irrespective of whether $\epsilon$ is small or large. Therefore, we advise to use our adaptive method to determine trajectory regions, which can restrict trajectory regions more precisely and thus obtain the best performance.

Next, we investigate whether the bi-directional information improves the performance of our mechanism. From Figures~\ref{fig4}b and \ref{fig4}e, it can be seen that the ATP mechanism using bi-directional information outperforms the same mechanism using uni-directional information, which confirms that additional direction information enables more accurate restriction of trajectory regions.

We also study the effects of different direction granularities $|\mathcal{D}|\in \{2,4,6,12\}$ on the performance. As shown in Figures~\ref{fig4}c and \ref{fig4}f, when $\epsilon$ is small, the ATP mechanism with coarse-grained directions performs better than that with fine-grained directions. 
\textcolor{black}{The choice of the cardinality of discrete directions mainly depends on the given privacy budget. Note that direction selection aims to restrict the perturbation domain, i.e.,  $|\mathcal{D}|=g$ means to narrow down the domain to $\frac{1}{g}$, which reduces the location perturbation noise and thus improves the data utility. However, direction perturbation itself also introduces some noise, where finer direction granularity comes with heavier perturbation under LDP. 
% Therefore, for a small $\epsilon$, we tend to select a small $|\mathcal{D}|$ to avoid large direction perturbation noise. On the contrary, as $\epsilon$ increases, direction perturbation becomes less eminent so that we can select a larger $|\mathcal{D}|$ to narrow down the perturbation domain and enhance the utility more significantly.
Therefore, for a small $\epsilon$, we tend to select a small $|\mathcal{D}|$ value to avoid large direction perturbation noise. On the contrary, as $\epsilon$ increases, direction perturbation becomes less eminent, and, therefore, we can select a larger $|\mathcal{D}|$ value to enhance the utility more significantly. 
%With that said, $|\mathcal{D}|=2$ means the perturbation domain is narrowed down by half, e.g., left and right. So our mechanism perturbs a location into another within a side. This significantly reduces the perturbation noise of selecting a sanitized location. Note that this small value of $|\mathcal{D}|=2$ corresponds to a small privacy budget ($\epsilon\leq 2$). This setting works because the accuracy gain by reducing the perturbation domain by half overweighs the cost of direction perturbation noise, especially given a relatively small privacy budget.
% Therefore, For a small granularity, e.g., $|\mathcal{D}|=2$, means the perturbation domain is narrowed down by half, e.g., left and right, then our mechanism perturbs a location into another within a side. This significantly reduces the perturbation noise of selecting a sanitized location. Note that such a small value of $|\mathcal{D}|=2$ corresponds to a small privacy budget. The reason why this setting works is that, the accuracy gain by reducing the perturbation domain by half is more eminent than the cost of direction perturbation noise, when given a relatively small privacy budget. 
}
% \textcolor{red}{Based on the first law of geography~\cite{T1970} and ~\cite{GHB2008}, the region covered by a trajectory is likely to be relatively small, which implies that the distances between each pair of neighboring points in a trajectory are also likely to be small. Because the origin of the direction is the perturbed neighbor pivot, it may not be too far from the target point so that the utility of (A)TP mechanism may not be significantly affected. Furthermore, the trajectory region constraint in the ATP mechanism improves performance when a low cardinality of $\mathcal{D}$ is set, e.g., $\mathcal{|D|}=2$. As the privacy budget increases, the advantages of a finer direction granularity become more apparent, resulting in a smaller and more proximate perturbation domain, thereby improving the utility of our mechanism.} 

% This is because when $\epsilon$ is small, the answers reported by the $k$-RR mechanism are almost uniformly random. Therefore, it is difficult to choose the correct direction, and it becomes even more difficult as the granularity is finer. When perturbed direction becomes more accurate as $\epsilon$ increases, the benefit of fine-grained directions becomes more apparent. However, the direction granularity cannot be too fine (depending on the density of points) because it will lead to many empty point sets, which harms the utility. %We will explore the relationship between the direction granularity and the density of points in our future work.

\begin{table}[t]
\centering
\caption{Comparison of success probability. Each value is calculated as per Equation~\ref{eqprob}. The boldfaced values are the largest values under different $\epsilon$ values.}
%\vspace{-0.1in}

\resizebox{0.85\columnwidth}{!}{\begin{tabular}{c|c|c|c|c}
\hline
     & $|\mathcal{D}|$=2 & $|\mathcal{D}|$=4 & $|\mathcal{D}|$=6 & $|\mathcal{D}|$=12 \\ \hline
$\epsilon$=0.01 & \textbf{0.25035156}         & 0.20871446                 & 0.16699901                 & 0.09869639                 \\ \hline
$\epsilon$=0.05 & \textbf{0.25175778}        & 0.21024432                 & 0.16833461                 & 0.09954752                  \\ \hline
$\epsilon$=0.1  & \textbf{0.25351539}        & 0.21216864                 & 0.17001818                 & 0.10062269                   \\ \hline
$\epsilon$=0.5  & \textbf{0.26754921}        & 0.22803653                 & 0.18405465                 & 0.10968742                  \\ \hline
$\epsilon$=1    & \textbf{0.28492633}        & 0.24901168                 & 0.20304037                 & 0.12224172                  \\ \hline
$\epsilon$=2    & \textbf{0.3185154}        & 0.29434453                  & 0.24584151                 & 0.15185139                  \\ \hline
$\epsilon$=4    & 0.37745749                 & \textbf{0.39365306}        & 0.34902402                 & 0.23167506                  \\ \hline
$\epsilon$=8    & 0.45232527                 & 0.57649644                 & \textbf{0.58164843}        & 0.47196792                 \\ \hline
$\epsilon$=10   & 0.47167379                 & 0.63974545                 & \textbf{0.6787087}        & 0.60876684                 \\ \hline
\end{tabular}}
%\vspace{-0.1in}

\label{table1}
\end{table}

We then evaluate how the size of $\mathcal{P}$ impacts the utility of the ATP mechanism. We conduct experiments on the datasets with thousands of locations, i.e., NYC and CLE datasets with 1,000 and 2,000 locations, respectively, following the previous study~\cite{CCF2021}. We take the most popular 10\%, 25\%, 50\%, and 100\% locations from them to evaluate the performance of our mechanism with privacy budget $\epsilon=4$. We can observe in Figure~\ref{fig5}a that the impact of a large $|\mathcal{P}|$ is not too significant. This is because the anchor-based trajectory region constraint mitigates the negative influence of a large $|\mathcal{P}|$. 

Furthermore, we conduct experiments on the synthetic CPS dataset to assess the performance of our mechanisms under different tuning parameters. As with the previous study~\cite{CCF2021}, we generate 4,000 trajectories with different fixed lengths ranging from 3 to 8 and different reach bounds between each two points in a trajectory with different travel speeds from 1 to 16 km/hr, and with no reach constraint (i.e., $\infty$). As shown in Figures~\ref{fig5}b and ~\ref{fig5}c, overall, ATP outperforms TP, and their performance is quite stable with different trajectory lengths or different reach bounds. Besides, we have an interesting observation that the NE of ATP even slightly decreases with the increasing trajectory length, which is probably because more locations contribute to the effect of direction restriction. 

%\textcolor{black}{We then evaluate how the size of $\mathcal{P}$ and the tuning parameters influence the performance of our proposed mechanisms. To study the size of $\mathcal{P}$, we conduct experiments on the datasets with thousands of locations, i.e., NYC and CLE datasets with 1,000 and 2,000 locations, respectively, following the previous study~\cite{CCF2021}. We take the most popular 10\%, 25\%, 50\%, and 100\% locations from them to evaluate the performance of our mechanism with privacy budget $\epsilon=4$. We can observe in Figure~\ref{fig5}a that the impact of a large $|\mathcal{P}|$ is not too significant. This is because the anchor-based trajectory region constraint mitigates the negative influence of a large $|\mathcal{P}|$. For the tuning parameters, as with the previous study~\cite{CCF2021}, we conduct experiment on CPS dataset, where trajectories have different fixed lengths ranging from 3 to 8, different reach bounds between each two points in a trajectory with different travel speeds from 1 to 16 km/hr, and no reach constraint (i.e., $\infty$). As shown in Figures~\ref{fig5}b and ~\ref{fig5}c, overall, ATP outperforms TP, and their performance is quite stable with different trajectory lengths or different reach bounds. Besides, we have an interesting observation that the NE of ATP even slightly decreases with the increasing trajectory length, which is probably because more locations contribute to the effect of direction restriction.}

Finally, we validate our proposed direction granularity strategy. As aforementioned, we calculate the average probability of success to choose the correct discrete direction for different ranges and for each granularity $|\mathcal{D}|$ under different $\epsilon$ values. We set different radian intervals as $\Theta= \{\pi/2,\pi/4,\pi/6,\pi/12\}$, corresponding to different candidate granularities $|\mathcal{D}|\in \{2,4,6,12\}$. In Table~\ref{table1}, $\epsilon$ denotes the total privacy budget, instead of the budget that is used to compute the probability for direction perturbation (i.e., three quarters of $\frac{3\epsilon^{\prime}}{4}$ or $\frac{3\epsilon^\ast}{4}$, as described in Section~\ref{para}). \textcolor{black}{We can observe that the granularity chosen based on the average direction-preserving success probability almost always corresponds to one of the best granularities under different $\epsilon$ values in Figures~\ref{fig4}c and \ref{fig4}f.}

\begin{table}[t]
\centering
\caption{Comparison of average count differences (ACDs) for different methods ($\epsilon$=4).}
%\vspace{-0.1in}

\resizebox{0.62\columnwidth}{!}{\begin{tabular}{c|c|c|c|c}
\hline
     & NYC & CHI & CLE & CPS \\ \hline
CGM & 17.7344         & 11.4397                 &12.1424        &73.446         \\ \hline
NGRAM & 13.5946         & 11.7533                 &10.2275        &32.3889         \\ \hline
TP & 10.1774        & 7.1965                 &5.8968             &25.8770     \\ \hline
ATP  & 10.9542        & 7.4568                 &6.1147            &28.4412       \\ \hline

\end{tabular}}
%\vspace{-0.1in}
\label{table_ACD}
\end{table}

\subsubsection{Evaluation on Practical Applications}
\textcolor{black}{We further consider hotspot preservation as a practical application to demonstrate the advantages of our mechanisms over the existing works. Since we do not assume time information in our setting, we evaluate the average count difference (ACD), a popular metric for hotspot preservation~\cite{CCF2021}. Given a set of the most popular locations, ACD calculates the average absolute difference of their counts before and after perturbation. The experiments are conducted on four datasets with privacy budget $\epsilon=4$. In particular, for the NYC dataset, we evaluate the ACD of the top 50\% of popular locations, while for the CHI, CLE, and CPS datasets with less locations than NYC, we evaluate the ACD of their top 75\% of popular locations. As shown in Table~\ref{table_ACD}, we can observe that both TP and ATP mechanisms achieve significantly lower ACD than NGRAM and CGM on all datasets, which again confirms the superiority of our proposed methods. Without additional external knowledge, the unbounded noise will largely impact the performance of NGRAM and CGM. TP and ATP achieve similar performance. TP can obtain slightly lower ACD than ATP. We believe this is because ATP restricts the perturbation region of a trajectory, which may lead to a slightly more skewed hotspot count distribution. Nevertheless, ATP is still the best choice for a data collector to perform a wide range of analysis tasks.}

\section{Conclusion}\label{conclusion}
In this paper, we propose a novel private trajectory data collection mechanism called ATP that satisfies pure $\epsilon$-LDP. We model trajectory perturbation as a two-stage process, which first estimates discrete directions and then performs point perturbations. The proposed mechanism is further enhanced by adaptively restricting the trajectory region. To the best of our knowledge, our solution is the first to combine direction information with the perturbation of a trajectory under LDP. 
% The experimental results on both real-world and synthetic datasets show the effectiveness of the proposed mechanism. 
We also provide theoretical utility analysis of the region size and a guideline to choose a suitable direction granularity based on the given privacy budget. From the experimental results, we can observe an interesting correlation between the underlying distribution of the location points (e.g., their densities in different regions) and the direction granularity selection process. We believe exploring such a correlation is a valuable direction for future work. %\textcolor{red}{ While a fixed granularity is more efficient for perturbing a trajectory, it remains unclear how the concrete regional location distribution affects the optimal choice of direction granularity, which varies across different locations. Furthermore, it remains unclear whether and how we can decide different granularities for different locations in a trajectory to further enhance the utility. We leave these non-trivial questions for future work.} 

% The mechanisms proposed in this study can easily extend to other LDP variants, such as geo-indistinguishability~\cite{ABC2013} and personalized LDP~\cite{CLQ2016}. 

\begin{acks}
This work was supported by the National Key R\&D Program of China (Grant No. 2020YFB1710200), the National Natural Science Foundation of China (Grant No. 62072136, 62102334, 62072390, and 92270123), and the Research Grants Council, Hong Kong SAR, China (Grant No. 15218919, 15203120, 15226221, 15225921, 15209922, and C2004-21GF).
\end{acks}

%\clearpage
\balance
\bibliographystyle{ACM-Reference-Format}
\bibliography{mybibliography}

\end{document}